\begin{document}

\theoremstyle{definition}
\newtheorem{theorem}{Theorem}
\newtheorem{definition}[theorem]{Definition}
\newtheorem{problem}[theorem]{Problem}
\newtheorem{assumption}[theorem]{Assumption}
\newtheorem{corollary}[theorem]{Corollary}
\newtheorem{proposition}[theorem]{Proposition}
\newtheorem{example}[theorem]{Example}
\newtheorem{lemma}[theorem]{Lemma}
\newtheorem{observation}[theorem]{Observation}
\newtheorem{fact}[theorem]{Fact}
\newtheorem{question}[theorem]{Open Question}
\newtheorem{conjecture}[theorem]{Conjecture}
\newtheorem{addendum}[theorem]{Addendum}
\newcommand{\uint}{{[0, 1]}}
\newcommand{\Cantor}{{\{0,1\}^\mathbb{N}}}
\newcommand{\name}[1]{\textsc{#1}}
\newcommand{\id}{\textrm{id}}
\newcommand{\dom}{\operatorname{dom}}
\newcommand{\hide}[1]{}

\title{Equilibria in multi-player multi-outcome infinite sequential games}

\author{
St\'ephane Le Roux
\institute{D\'epartement d'informatique\\ Universit\'e libre de
Bruxelles\thanks{The project was begun while the first author was a postdoctoral researcher at the University of Darmstadt, Germany.}, Belgium}
\email{Stephane.Le.Roux@ulb.ac.be}
\and
Arno Pauly
\institute{D\'epartement d'informatique\\ Universit\'e libre de
Bruxelles, Belgium\\ \& \\ Computer Laboratory\\ University of Cambridge, United Kingdom}
\email{Arno.Pauly@cl.cam.ac.uk}
}

\def\titlerunning{Multi-player multi-outcome infinite sequential games}
\def\authorrunning{S. Le Roux \& A. Pauly}
\maketitle

\begin{abstract}
We investigate the existence of certain types of equilibria (Nash, $\varepsilon$-Nash, subgame perfect, $\varepsilon$-subgame perfect, Pareto-optimal) in multi-player multi-outcome infinite sequential games. We use two fundamental approaches: one requires strong topological restrictions on the games, but produces very strong existence results. The other merely requires some very basic determinacy properties to still obtain some existence results. Both results are transfer results: starting with the existence of some equilibria for a small class of games, they allow us to conclude the existence of some type of equilibria for a larger class.

To make the abstract results more concrete, we investigate as a special case infinite sequential games with real-valued payoff functions. Depending on the class of payoff functions (continuous, upper semi-continuous, Borel) and whether the game is zero-sum, we obtain various existence results for equilibria.

Our results hold for games with two or up to countably many players.
\end{abstract}

\section{Introduction}
The present article continues the research programme to investigate sequential games in a very general setting, which was initiated by the first author in \cite{leroux3,leroux4}. It extends the conference paper \cite{paulyleroux2}. This programme reunites two mostly separate developments in the study of games. On the one hand, the first development is the investigation of variations on solution concepts for games, and of different formalizations of the preferences of the players, which primarily happened inside game theory proper. Related to that, game theory has also seen an interest in relaxing the continuity and convexity assumptions of \name{Nash}'s original existence theorem \cite{nash}. Both for an example and references to further work see \cite{nessah}. So typically, the countably infinite is absent from game theory: sets are either finite, or in cases such as randomized strategies, have the structure of the continuum.

On the other hand, the study of infinite sequential games has a long history in logic. Many variations on the rules of games have been studied, albeit mostly restricted to zero-sum games with two players and two outcomes. Thus, here the continuum is entirely absent (disregarding its internal occurrence in the set of potential plays), and the countably infinite is only used for time, not for e.g.~the number of agents or outcomes.

In our work we study infinite sequential games with perfect information (i.e.~generalized Gale-Stewart games \cite{gale2}) in a setting as general as possible. We can have any countable number of players, and we investigate various ways to represent preferences. Some results do put restrictions on the number of distinguished outcomes (as being countable), and some require a (generalized) zero-sum condition. A similar synthesis of the approaches is found in \cite{ummels,ummels2,flesch,purves}.

A classical result by \name{Martin} \cite{martin} established that such a game played by two players who only care about whether or not the play falls into some fixed Borel set is determined, i.e.~admits a winning strategy for one of the players. While determinacy can be understood as a special case of existence of Nash equilibrium, Nash equilibrium is often regarded as an unsatisfactory solution concept for sequential games in game theory. Subgame perfect equilibria are a more convincing solution concept from a rationality perspective. This article studies these two concepts, and as best responses are not always available, we also investigate the existence of $\varepsilon$-Nash equilibria and $\varepsilon$-subgame perfect equilibria. Furthermore, we also study Pareto-optimal equilibria.

The proofs that we provide fall into two broad categories: some of our results are obtained by lifting Borel determinacy to more complicated settings, similar to \cite{leroux3,leroux4} or to a sketched observation by \name{Mertens} and \name{Neyman} in \cite{mertens}. Other results are based on topological arguments to show strong existence results, albeit at the cost of continuity requirements in the game characterizations. Both proof techniques provide general results in rather abstract settings. In this sense, our main results are Theorem \ref{theo:cont} on the one hand, and Theorems \ref{thm:zs-fc-spe}, \ref{thm:zs-fo-spe}, \ref{thm:iac-ne} (which share parts of their proofs) on the other hand.

A common theme of the results is that they are transfer principles: they tell us how to take a pre-existing result on a more restricted class of games and obtain from it a result for a more general class of games. Theorem \ref{theo:cont} is then used to transfer the existence of subgame perfect equilibria from finite sequential games to certain infinite ones. With Theorems \ref{thm:zs-fc-spe}, \ref{thm:zs-fo-spe}, \ref{thm:iac-ne}, we can in particular extend Borel determinacy to yield equilibria in multi-player multi-outcome settings.

In an attempt to make the rather abstract results somewhat more accessible, we shall consider in addition the corollaries obtained in the situation where the goals of the players are to maximize real-valued payoff functions. Distinguishing properties here are continuity, upper or lower semicontinuity and Borel measurability. These settings have been studied before \cite{fudenberg,mertens,purves,flesch}, and usually our corollaries improve upon known results by extending them from the case of finitely many players to the case of countably many players. An overview of past and new results is given in the table on Page \pageref{fig:overview}.

Our results showcase which requirements are actually needed for which aspects of determinacy, and as such may contribute to the understanding of strategic behaviour in general.

Additionally, the emergence of quantitative objectives in addition to qualitative structure in traditional verification/synthesis games \cite{chatterjee2}, such as mean-payoff parity games \cite{chatterjee3,chatterjee4}, provides an area of applications for abstract theorems about the existence of equilibria. Existence results in such settings are usually not trivial, but are proven together with the introduction of the setting -- thus we do not answer open questions, but are hopeful that our results may be useful in the future. Having results for countably many players is important for applications if multi-agent interactions in open systems are studied. In order to employ an equilibrium existence result for finitely many players, a bound on the number of agents involved in the interaction might need to be common knowledge from the beginning on. Our results on the other hand easily enable a setting where additional agents may join the interaction later on, and only the number of agents who have acted in the past is finite.

This article is an extended and improved version of \cite{paulyleroux2}.

\begin{figure*}
\center
{\bf Overview}

\begin{tabular}{ll|cccc}
Payoff functions & Type & $\varepsilon$-Nash & Nash & $\varepsilon$-subgame perfect & subgame perfect\\
\hline
\hline
continuous & finitely many & yes & yes & yes & yes\\
& players & - & - & - & \cite[Corollary 4.2]{fudenberg} \\
\hline
continuous & countably many & yes & yes & yes & yes\\
& players & - & - & - & Corollary \ref{corr:contseque} \\
\hline
upper & finitely many & yes & yes & yes & no\\
semi-continuous & players & - & Corollary \ref{corr:uppercont} & \cite[Theorem 2.1]{purves} & Example \ref{ex:nosubgameperfect2}\\
\hline
upper & countably many & yes & yes & ? & no\\
semi-continuous & players & - & Corollary \ref{corr:uppercont} &  & Example \ref{ex:nosubgameperfect2}\\
\hline
lower & finitely many & yes & no & yes & no\\
semi-continuous & players & - & Example \ref{example:nonash} & \cite[Theorem 2.3]{flesch} & -\\
\hline
lower & countably many & yes & no & no & no\\
semi-continuous & players & - & Example \ref{example:nonash} & \cite[Subsection 4.3]{flesch} & -\\
\hline
Borel & zero-sum & yes & no & yes & no\\
& & - & Example \ref{example:nonash} & Corollary \ref{corr:esubgame} & - \\
\hline
Borel & finitely many & yes & no & no & no\\
& players & Corollary \ref{corr:borelhasenash} & Example \ref{example:nonash}& Example \ref{ex:nosubgameperfect} & - \\
\hline
Borel & countably many & yes & no & no & no\\
& players & Corollary \ref{corr:borelhasenash} & Example \ref{example:nonash}& Example \ref{ex:nosubgameperfect} & - \\
\end{tabular}

{\footnotesize Note: the table shows for any relevant combination of properties of the payoff functions, type of the game and type of equilibria whether such equilibria exist always or whether there is a counterexample. The results as listed here pertain to having finitely many choices at each stage of the game. There is no difference between two-player games and games with finitely many players in any situation we investigate. As we consider perfect information games only, any zero-sum game is understood to be a two-player game. The combination of semi-continuity and zero-sum would imply continuity, and is thus left out. For continuous payoff functions, we already see complete positive results without the zero-sum condition, and thus do not mention it explicitly either. Both Corollary \ref{corr:borelhasenash} and Example \ref{example:nonash} seem to be folklore results.}
\label{fig:overview}
\end{figure*}

\section{Background}
In our most abstract definition, a game is a tuple $\langle A, (S_a)_{a \in A}, (\prec_a)_{a \in A}\rangle$ consisting of a non-empty set $A$ of \emph{agents} or \emph{players}, for each agent $a \in A$ a non-empty set $S_a$ of \emph{strategies}, and for each agent $a \in A$ a \emph{preference} relation $\mathalpha{\prec}_a \subseteq \left (\prod_{a \in A} S_a \right ) \times \left (\prod_{a \in A} S_a \right )$. The generic setting suffices to introduce the notion of a Nash equilibrium: a \emph{strategy profile} $\sigma \in \left (\prod_{a \in A} S_a \right )$ is called a Nash equilibrium, if for every agent $a \in A$ and every strategy $s_a \in S_a$ we find $\neg \left (\sigma \prec_a \sigma_{a\mapsto s_a} \right )$, where $\sigma_{a\mapsto s_a}$ is defined by $\sigma_{a\mapsto s_a}(b) = \sigma(b)$ for $b \in A \setminus \{a\}$ and $\sigma_{a\mapsto s_a}(a) = s_a$. In words, no agent prefers over a Nash equilibrium some other situation that only differs in her choice of strategy.

We will give additional structure to games in two primary ways: in Section \ref{sec:continuityargument} we add topologies to the strategy spaces, and then impose some topological constraints on both strategy spaces and preferences. Beyond that, we will consider games where strategy spaces and preferences are derived objects from more structured variants of games. One such variant is the infinite sequential game:
\begin{definition}[Infinite sequential game, {cf.~\cite[Definition 1.1]{leroux3}}]\label{defn:ifg} An
  \emph{infinite sequential game}  is an object $\langle A,C,d,O,v,(\prec_a)_{a\in A}\rangle$ complying with the following.
\begin{enumerate}
\item $A$ is a non-empty set (of agents).
\item $C$ is a non-empty set (of choices).
\item $d:C^*\to A$ (assigns a decision maker to each stage of the game).
\item $O$ is a non-empty set (of possible outcomes of the game).
\item $v:C^\omega\to O$ (assigns outcomes to infinite sequences of choices).
\item Each $\prec_a$ is a binary relation over $O$ (modeling the preference of agent $a$).
\end{enumerate}
\end{definition}

The intuition behind the definition is that agents take turns to make a choice. Whose turn it is depends on the past choices via the function $d$. Over time, the agents thus jointly generate some infinite sequence, which is mapped by $v$ to the outcome of the game. Note that using a single set of actions $C$ for each step just simplifies the notation, a generalization to varying action sets is straightforward.

The infinite sequential games can be seen as abstract games: the agents remain the agents and the strategies of agent $a$ are the functions $s_a : d^{-1}(\{a\}) \to C$. We can then safely regard a strategy profile as a function $\sigma : C^* \to C$ whose \emph{induced play} is defined below, where for an infinite sequence $p \in C^\omega$ we let $p_n$ be its $n$-th value, and $p_{\leq n} = p_{<n + 1} \in C^*$ be its finite prefix of length $n$.
 \begin{definition}[Induced play and outcome, {cf.~\cite[Definition 1.3]{leroux3}}]\label{defn:ipo}
Let $s:C^*\to C$ be a strategy profile. The \emph{play $p=p^{\gamma}(s)\in
  C^\omega$  induced by $s$ starting at $\gamma \in C^*$} is defined inductively through its prefixes:
$p_n = \gamma_n$ for $n \leq |\gamma|$ and $p_{n}:=s(p_{< n})$ for $n > |\gamma|$. Also, $v\circ p^{\gamma}(s)$ is the \emph{outcome induced by $s$ starting at $\gamma$}. The play (resp. outcome) induced by $s$ is the play (resp. outcome) induced by $s$ starting at $\varepsilon$.
\end{definition}

In the usual way to regard an infinite sequential game as a special abstract game, an agent prefers a strategy profile $\sigma$ to $\sigma'$, iff he prefers the outcome induced by $\sigma$ to the outcome induced by $\sigma'$. And indeed we shall call a strategy profile of an infinite sequential game a Nash equilibrium, iff it is a Nash equilibrium with these preferences. In a certain notation overload, we will in particular use the same symbols for the preferences over strategy profiles and the preferences over outcomes.

However, there is a certain criticism of this choice as being not rational: essentially, the resulting concept of a Nash equilibrium means that players can use empty threats -- declarations they would play in a certain way from a position onwards, even if that would be against their own interests once the position is reached, as long as this threat keeps other players from moving to that position. This can be fixed by considering subgame perfect equilibria \cite{selten}. We understand these as the Nash equilibria derived from a different translation of preferences from infinite sequential games to abstract games. (Similar remarks were made in \cite[Lemma 144 in Section 7.2.3, Section 7.3.2]{SLR-PhD08}.)

\begin{definition}
Given an infinite sequential game $\langle A,C,d,O,v,(\prec_a)_{a\in A}\rangle$, let the subgame perfect preferences\footnote{Note that the translation of preferences in the following definition does not
preserve acyclicity. Preservation could be ensured, \textit{e.g.}, by
giving the nodes a linear "priority" order, in a lexicographic fashion.
This, however, would complicate the definition against little benefit for
the point that we want to make.} $\mathalpha{\prec}_a^{sgp} \subseteq C^{C^*} \times C^{C^*}$ be defined by $\sigma \prec_a^{sgp} \sigma'$ iff $\exists \gamma \in C^*$ such that $p^{\gamma}(\sigma) \prec_a p^{\gamma}(\sigma)$.

The subgame perfect equilibria of $\langle A,C,d,O,v,(\prec_a)_{a\in A}\rangle$ are the Nash equilibria \linebreak of $\langle A, (C^{d^{-1}(\{a\})})_{a \in A}, (\prec_a^{sgp})_{a \in A}\rangle$.
\end{definition}

We consider a further variant, namely the \emph{infinite sequential games with real-valued payoffs}, which can (but do not have to) be understood as a special case of infinite sequential games.
\begin{definition}
\label{def:payout}
An infinite sequential game with real-valued payoffs is a tuple $\langle A, C, d, (f_a)_{a \in \mathbb{N}}\rangle$ where $A$, $C$, $d$ are as above, and $f_a : C^\omega \to \mathbb{R}$ is the payoff function of player $a$.

Such a game can be identified with the infinite sequential game $\langle A, C, d, \mathbb{R}^A, v,(\prec_a)_{a\in A}\rangle$ where $v(p) = (f_a(p))_{a \in A}$ and for $x, y \in \mathbb{R}^A$, we set $x \prec_a y$ iff $x_a < y_a$.
\end{definition}

As with the introduction of subgame perfect equilibria, we can consider infinite sequential games with real-valued payoffs as infinite sequential games in a different way, which then gives rise to another commonly studied equilibrium concept, namely $\varepsilon$-Nash equilibria. Depending on how we then translate from infinite sequential games to abstract games, we obtain also $\varepsilon$-subgame perfect equilibria. Given some $\varepsilon > 0$, we define the relation $\prec_a^\varepsilon \subseteq \mathbb{R}^A \times \mathbb{R}^A$ by $x \prec_a^\varepsilon y$ iff $y_a - x_a > \varepsilon$. Using $\prec_a^\varepsilon$ in place of $\prec_a$ in Definition \ref{def:payout} then provides the above-mentioned equilibrium notions.

For infinite sequential games with real-valued payoffs, every Nash equilibrium (w.r.t.~the standard preferences) is an $\varepsilon$-Nash equilibrium; and every subgame perfect equilibrium is an $\varepsilon$-subgame perfect equilibrium. For infinite sequential games, every subgame perfect equilibrium is a Nash equilibrium, in particular, any $\varepsilon$-subgame perfect equilibrium is an $\varepsilon$-Nash equilibrium.

We use \emph{antagonistic game} to refer to two-player games with preferences satisfying $\prec_a = \prec_b^{-1}$, where $x \prec^{-1} y \Leftrightarrow y \prec x$.

We proceed to recall a few more notions that are only tangentially related to the formulation of our results, but that do show up in the proofs.
\begin{definition}
\label{def:2pwinl}
A two-player win-loose game is a tuple $\langle C, D, W\rangle$ with $D \subseteq C^*$ and $W \subseteq C^\omega$. It corresponds to the infinite sequential game $\langle \{a,b\}, C, d, \{0,1\}, v, \{<, <^{-1}\}\rangle$ where $d$ is defined via $d^{-1}(\{a\}) = D$ and $v$ is defined via $v^{-1}(\{1\}) = W$.
\end{definition}

Finally, we will extend the notion of the induced play. Given some partial function $s : \subseteq C^* \to C$, we define the consistency set $P(s) \subseteq C^\omega$ by: \[P(s) = \{p(\sigma) \mid \sigma : C^* \to C \wedge \sigma|_{\dom(s)} = s\}\]

\subsection*{Pareto-optimality}

Pareto-optimality provides a notion of social desirability in game theory, and can be used both to pick particularly \emph{nice} equilibria, and to investigate whether the strategic interaction is costly in some sense\footnote{Similar to the (quantitative) \emph{price of stability}, see \cite{roughgarden}.}.  The fundamental idea is that a Pareto-optimal outcome cannot be improved to everyone's satisfaction. Pareto-optimality is often only defined for linear preferences (or, slightly more general, strict weak orders), and its extension to general preferences is not obvious. The two natural choices are:

\begin{definition}
An outcome is \emph{realizable} in some game, if it is assigned to some sequence of choices. We call an outcome $o$ \emph{Pareto-optimal}, if there is no other realizable outcome $q$ such that for some player $a$ we find $o \prec_a q$ and for no player $b$ we have $q \prec_a o$. We call an outcome $o$ \emph{weakly Pareto-optimal}, if there is no other realizable outcome $q$ such that for some player $a$ we find $o \prec_a q$ and for all players $b$ we have that $o \preceq_b q$.
\end{definition}

Note that for linear preferences, both notions coincide. We shall call a Nash equilibrium (weakly) Pareto-optimal iff it induces a (weakly) Pareto-optimal outcome.

\section{The continuity argument}
\label{sec:continuityargument}
A strong transfer result can be obtained using topological arguments alone, with the reasoning being particularly well-adapted to a formulation in synthetic topology (originally \cite{escardo}, \cite{pauly-synthetic-arxiv} for a short introduction). Consider games in normal form, with potentially countably many agents with strategy spaces $\mathbf{S}_1, \mathbf{S}_2, \ldots$. Our first condition is that each $\mathbf{S}_i$ be compact (subsequently, by Tychonoff's Theorem, also $\Pi_{i \in \mathbb{N}} \mathbf{S}_i$). This restriction is very common and usually combined with continuity of the outcome function, as it avoids pathological games such as
\emph{pick-the-largest-natural-number}. Our second condition is that each preference relation $\prec_i$ is open (as a subset of $\left ( \Pi_{i \in \mathbb{N}} \mathbf{S}_i \right ) \times \left ( \Pi_{i \in \mathbb{N}} \mathbf{S}_i \right )$).\footnote{There actually is a third condition, that any strategy space is overt. A space $\mathbf{X}$ is overt, if $\{\emptyset\} \subseteq \mathcal{O}(\mathbf{X})$ is a closed set, i.e.~if there is a way to detect non-emptiness of open subsets. This condition could only ever fail in a constructive reading, but is always valid for topological spaces in classical logic. Synthetic topology however would also allow us to read \emph{continuous map} to mean \emph{computable map}, in which case overtness becomes non-trivial. In this reading, though, we actually obtain an algorithmic result.}  In the reading of synthetic topology, this means that any agent will be able to eventually confirm that he prefers a given strategy profile to another, provided he does indeed do so. We shall call a class of games $\mathcal{G}$ satisfying these conditions (in a uniform way) to be compact-strategies, open-preferences. Uniformity here means that we assume a topology on $\mathcal{G}$ such that the function mapping a game to the preferences is continuous itself.

We will write $\mathcal{O}(\mathbf{X})$ for the hyperspace of open subsets of $\mathbf{X}$, and $\mathcal{K}(\mathbf{X})$ for the hyperspace of compact sets. By $\mathcal{C}(\mathbf{X}, \mathbf{Y})$ we denote the space of continuous functions from $\mathbf{X}$ to $\mathbf{Y}$, in particular $\mathcal{C}(\mathbb{N},\mathbf{X})$ we denote the space of sequences in $\mathbf{X}$. For precise definitions, see \cite{pauly-synthetic-arxiv}. There we also find that the following operations are continuous:
\begin{enumerate}
\item \footnote{In general, the continuity of this map would require $\mathbf{Y}$ to be overt. As explained above, in a classical reading, this condition is always satisfied.} $\exists : \mathcal{O}(\mathbf{X} \times \mathbf{Y}) \to \mathcal{O}(\mathbf{X})$, defined \\ by $\exists(U) = \{x \in \mathbf{X} \mid \exists y \in \mathbf{Y} \ (x,y) \in U\}$
\item $\bigcup : \mathcal{C}(\mathbb{N}, \mathcal{O}(\mathbf{X})) \to \mathcal{O}(\mathbf{X})$
\item $^C : \mathcal{O}(\mathbf{X}) \to \mathcal{K}(\mathbf{X})$, provided that $\mathbf{X}$ is compact.
\item $\textrm{NonEmptyValue} : \mathcal{C}(\mathbf{X}, \mathcal{K}(\mathbf{Y})) \to \mathcal{O}(\mathbf{X})$ defined by\\$\textrm{NonEmptyValue}(f) = \{x \in \mathbf{X} \mid f(x) \neq \emptyset\}$.
\end{enumerate}

\begin{theorem}
\label{theo:cont}
Let $\mathcal{G}$ be compact-strategies, open-preferences, and let $\mathcal{G}' \subseteq \mathcal{G}$ be a dense subclass. If every $G \in \mathcal{G}'$ has a Nash equilibrium, then every $G \in \mathcal{G}$ has a Nash equilibrium.
\begin{proof}
By combining our continuous operations, we may obtain the set of all games in $\mathcal{G}$ with a Nash equilibrium as an open set in the following way:
\[\mathcal{NE} := \textrm{NonEmptyValue}\left ( G \mapsto \left (\bigcup_{i \in \mathbb{N}} \exists (\prec_i^G) \right )^C\right ) \in \mathcal{O}(\mathcal{G})\]
Formulating the individual steps in words: with $\prec_i$ being open\footnote{And $\mathbf{S}_i$ being overt, see above.}, we immediately obtain that the set of all strategy profiles such that player $i$ has a better response is uniformly open in the game. Taking the union over all players again yields an open set, which complement now is the closed set $\textrm{NE}(G)$ of all Nash equilibria of the respective game $G$. As this is a subset of the compact space $\left ( \Pi_{i \in \mathbb{N}} \mathbf{S}_i \right )$, we can even treat $\textrm{NE}(G)$ as a compact set uniformly in $G$. By the synthetic definition of compactness, we obtain that $\{G \in \mathcal{G} \mid \textrm{NE}(G) = \emptyset\} \subseteq \mathcal{G}$ is an open set.

Because we have assumed $\mathcal{G}'$ to be dense in $\mathcal{G}$, we see that if any game in $\mathcal{G}$ would fail to have a Nash equilibrium, this would imply that some game in $\mathcal{G}'$ would fail, too, contrary to the assumption.
\end{proof}
\end{theorem}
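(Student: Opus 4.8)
The plan is to show that the subclass $\{G \in \mathcal{G} \mid \textrm{NE}(G) = \emptyset\}$ of games \emph{without} a Nash equilibrium is open in $\mathcal{G}$, where $\textrm{NE}(G) \subseteq \prod_{i \in \mathbb{N}} \mathbf{S}_i$ denotes the set of Nash equilibria of $G$; density of $\mathcal{G}'$ then immediately forbids this set from being nonempty. The whole argument amounts to threading $\textrm{NE}(G)$ through the four continuous hyperspace operations recalled above, so that it ends up as a compact-set-valued datum depending continuously on $G$.

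First I would check that, for a fixed $G$, the non-equilibria form an open subset of $\prod_{i \in \mathbb{N}} \mathbf{S}_i$, uniformly in $G$. A profile $\sigma$ fails to be a Nash equilibrium exactly when some player $i$ has a profitable deviation, i.e.~$\sigma \prec_i^G \sigma_{i \mapsto s_i}$ for some $s_i \in \mathbf{S}_i$. The substitution map $(\sigma, s_i) \mapsto (\sigma, \sigma_{i \mapsto s_i})$ is continuous, so pulling back the open relation $(\prec_i^G)$ along it gives an open subset of $\left(\prod_{j \in \mathbb{N}} \mathbf{S}_j\right) \times \mathbf{S}_i$, and applying the projection operator $\exists$ — legitimate since each $\mathbf{S}_i$ is overt, which is automatic in the classical reading — yields the open set $D_i^G$ of profiles at which $i$ can profitably deviate. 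By the uniformity hypothesis on $\mathcal{G}$ (continuity of $G \mapsto (\prec_i^G)$), $D_i^G$ depends continuously on $G$, hence so does $\bigcup_{i \in \mathbb{N}} D_i^G$, whose complement is precisely $\textrm{NE}(G)$.

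Next I would pass from $\mathcal{O}$ to $\mathcal{K}$. Each $\mathbf{S}_i$ is compact, so $\prod_{i \in \mathbb{N}} \mathbf{S}_i$ is compact by Tychonoff; therefore the complementation operator $U \mapsto U^C$ is continuous from $\mathcal{O}(\prod_{i \in \mathbb{N}} \mathbf{S}_i)$ to $\mathcal{K}(\prod_{i \in \mathbb{N}} \mathbf{S}_i)$, and composing the previous steps shows that $G \mapsto \textrm{NE}(G)$ is a continuous map into $\mathcal{K}(\prod_{i \in \mathbb{N}} \mathbf{S}_i)$. In synthetic topology, emptiness of a compact set is an open property, since $\{K \mid K = \emptyset\}$ coincides with $\{K \mid K \subseteq \emptyset\}$, a basic open set of the hyperspace; hence $\{G \in \mathcal{G} \mid \textrm{NE}(G) = \emptyset\}$ is open in $\mathcal{G}$. (Dually, $\textrm{NonEmptyValue}$ exhibits the games that \emph{do} have an equilibrium as an open class, but it is the emptiness direction that the density argument needs.) To conclude: if some $G \in \mathcal{G}$ had no Nash equilibrium, this open class would be nonempty, so by density of $\mathcal{G}'$ it would contain some $G' \in \mathcal{G}'$ with $\textrm{NE}(G') = \emptyset$, contradicting the hypothesis; therefore every $G \in \mathcal{G}$ has a Nash equilibrium.

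The only step I expect to require genuine care is the first: the deviation predicate must be formed by projecting $(\prec_i^G)$ \emph{after} precomposition with the substitution map — so as to capture exactly the deviations that change player $i$'s strategy alone — everything must be kept uniform in $G$, and the continuity of $\exists$ rests on overtness of the strategy spaces, which is harmless classically but is the precise point at which a computability-flavoured reading of the theorem would demand an extra hypothesis. Everything else is a mechanical composition of the four continuous operations listed before the statement.
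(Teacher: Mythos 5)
Your proposal is correct and follows essentially the same route as the paper's own proof: form the open set of profiles admitting a profitable deviation uniformly in $G$ via openness of $\prec_i^G$ (plus overtness, automatic classically), take the countable union over players, complement inside the Tychonoff-compact profile space to get $\textrm{NE}(G)$ as a compact set depending continuously on $G$, use synthetic compactness to make emptiness of $\textrm{NE}(G)$ an open condition on $G$, and finish with the density argument. You are in fact slightly more explicit than the paper on two points it glosses over --- precomposing $\prec_i^G$ with the substitution map $(\sigma,s_i)\mapsto(\sigma,\sigma_{i\mapsto s_i})$ before applying $\exists$, and noting that it is the openness of the \emph{empty}-equilibrium-set direction (not the nonemptiness one suggested by the displayed $\textrm{NonEmptyValue}$ formula) that the density argument actually uses.
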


\begin{lemma}
\label{lemma:contsubgame}
Consider sequential games with continuous payoff-functions and finite choices sets. We find:
\begin{enumerate}
\item The subgame-perfect preferences produce a compact-strategies, open-preference class $\mathcal{S}$.
\item The games with payoffs fully determined after finitely many moves are a dense subset $\mathcal{S}^f \subseteq \mathcal{S}$.
\item All games in $\mathcal{S}^f$ have a Nash equilibrium.
\end{enumerate}
\begin{proof}
\begin{enumerate}
\item First, note that the mapping $(\gamma, \sigma) \mapsto p_\gamma(\sigma)$ is continuous, so for any fixed $\gamma \in C^*$,\\ $\{(\sigma, \sigma') \mid f_a(p_\gamma(\sigma)) <  f_a(p_\gamma(\sigma'))\}$ is an open set. By taking countable union, we learn that $\prec_a^{sgp}$ is open. Compactness and overtness of the strategy spaces are straightforward.
\item As the argument in $(1)$ is uniform in the continuous functions $f_a$, it suffices to argue that the payoff functions $f : C^\omega \to \mathbb{R}$ depending only on some finite prefix of the input are dense in $\mathcal{C}(C^\omega, \mathbb{R})$. A countable base for the applicable topology is found in all $$\{f \mid \forall \gamma \in C^k, p \in C^\omega \ . \ p_{\leq k} = \gamma \Rightarrow f(p) \in (x_\gamma, y_\gamma)\}$$ for $$k \in \mathbb{N}, x_{(\cdot)}, y_{(\cdot)} : C^k \to \mathbb{Q}$$
    A base element is non-empty, iff $\forall \gamma \in C^k \  x_{\gamma} < y_{\gamma}$; and then it will contain the function $f_0 : C^\omega \to \mathbb{R}$ defined via $f_0(p) = \frac{1}{2}\left(x_{p_{\leq k}} + y_{p_{\leq k}}\right )$, which clearly depends only on the prefix of length $k$ of its argument.
\item As the actions of the players beyond the finite prefix determining the outputs is irrelevant, and taking into consideration the definition of the subgame-perfect preferences, the claim is that any finite game in extensive form has a subgame perfect equilibrium. This well-known result (\name{Kuhn} \cite{kuhn53}) is easily proven by backwards induction: let the players who move last pick an optimal (for them) choice. Then the players who move second-but-last have guaranteed outcomes associated with their moves, so they can optimize, and so on.
\end{enumerate}
\end{proof}
\end{lemma}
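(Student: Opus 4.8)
The three claims are essentially bookkeeping with the continuous operations recalled above, so I would take them in turn.

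For the first claim, note that the strategy space of agent $a$ is $C^{d^{-1}(\{a\})}$, a countable product of the finite discrete space $C$; by Tychonoff it is compact and metrizable, and overtness is automatic in the classical reading. For the preferences, fix $\gamma \in C^*$ and observe that $\sigma \mapsto p^\gamma(\sigma)$ is continuous, being locally constant on each output coordinate. Composing with the continuous $f_a$ and with the open relation $<$ on $\mathbb{R}$ shows that $U_\gamma := \{(\sigma,\sigma') \mid f_a(p^\gamma(\sigma)) < f_a(p^\gamma(\sigma'))\}$ is open, and in fact that $f_a \mapsto U_\gamma$ is a continuous map into $\mathcal{O}(\mathbf{S}_a \times \mathbf{S}_a)$. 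Since $\prec_a^{sgp} = \bigcup_{\gamma \in C^*} U_\gamma$ and $C^*$ is countable, the continuous operation $\bigcup : \mathcal{C}(\mathbb{N},\mathcal{O}(\mathbf{X})) \to \mathcal{O}(\mathbf{X})$ yields simultaneously that each $\prec_a^{sgp}$ is open and that the game-to-preferences assignment is continuous, i.e.\ that $\mathcal{S}$ is compact-strategies, open-preferences in a uniform way.

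For the second claim, the plan is to pin down the topology: since $C$ is finite, $C^\omega$ is compact metrizable and the topology on $\mathcal{C}(C^\omega,\uint)$ is that of uniform convergence, for which the sets displayed in the lemma --- indexed by $k \in \mathbb{N}$ and rationals $x_{(\cdot)},y_{(\cdot)} : C^k \to \mathbb{Q}\cap\uint$ --- form a countable base. A nonempty such basic set contains $p \mapsto \frac{1}{2}(x_{p_{\leq k}} + y_{p_{\leq k}})$, which depends only on $p_{\leq k}$, so finitely-determined payoff functions are dense in each coordinate. A basic open subset of the product class $\mathcal{S}$ constrains only finitely many players, so I would approximate each of those finitely many payoff functions by a finitely-determined one, take a common bound $N$ on the depths used, and set every remaining payoff function to the constant $0$; the resulting game has all payoffs determined after $N$ moves and lies in the prescribed basic set. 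Hence $\mathcal{S}^f$ is dense, and one may as well take $\mathcal{S}^f$ to consist of the games admitting a uniform such bound $N$ (for finitely many players this is automatic anyway).

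For the third claim, a game in $\mathcal{S}^f$ with all payoffs determined after $N$ moves is, as far as induced plays and hence the subgame-perfect preferences are concerned, the finite extensive-form game obtained by truncating $C^*$ at depth $N$: only finitely many players move in it, and choices beyond depth $N$ affect no payoff. I would then invoke Kuhn's theorem by backward induction --- let the players at the depth-$N$ nodes play arbitrarily, let the player owning each depth-$(N-1)$ node choose a successor maximising her own payoff, and continue up to the root --- so that any such profile is subgame perfect, i.e.\ a Nash equilibrium of the abstract game attached to that member of $\mathcal{S}^f$. I do not expect a genuinely hard step; the one place requiring care is the second claim, namely fixing the correct function-space topology together with an explicit countable base and handling the product over possibly countably many players so that the approximant really lies in $\mathcal{S}^f$. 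With all three claims in hand, Theorem \ref{theo:cont} applied to $\mathcal{S}^f \subseteq \mathcal{S}$ delivers a Nash equilibrium --- equivalently, a subgame-perfect equilibrium --- for every game in $\mathcal{S}$.
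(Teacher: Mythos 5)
Your proposal is correct and follows essentially the same route as the paper: the same continuity-plus-countable-union argument for openness of $\prec_a^{sgp}$, the same countable base and midpoint function for density, and Kuhn's backward induction for the finitely-determined games. The only difference is that you spell out more carefully than the paper how density in each coordinate $\mathcal{C}(C^\omega,\uint)$ lifts to the product over countably many players (via a common depth bound and constant payoffs for the unconstrained players), which is a welcome but not essentially different refinement.
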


\begin{corollary}[(\footnote{This extends \cite[Corollary 4.2]{fudenberg} from finitely many players to countably many players.})]
\label{corr:contseque}
Any sequential game with continuous payoff functions and finitely many choices has a subgame perfect Nash equilibrium.
\end{corollary}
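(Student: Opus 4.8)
The plan is to read the statement off directly by feeding Lemma~\ref{lemma:contsubgame} into Theorem~\ref{theo:cont}. Fix the class $\mathcal{S}$ of sequential games with a finite choice set $C$ and continuous payoff functions, equipped with the subgame-perfect preferences $(\prec^{sgp}_a)_{a\in A}$ and topologised via the topology of uniform convergence on each factor $\mathcal{C}(C^\omega,\uint)$ together with the product (equivalently, sequence) topology over the at most countably many players. Lemma~\ref{lemma:contsubgame}(1) tells us this class is compact-strategies, open-preferences; Lemma~\ref{lemma:contsubgame}(2) tells us the subclass $\mathcal{S}^f$ of games whose payoffs are already fixed by a finite prefix of the play is dense in $\mathcal{S}$; and Lemma~\ref{lemma:contsubgame}(3) tells us every game in $\mathcal{S}^f$ has a Nash equilibrium with respect to the subgame-perfect preferences. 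Theorem~\ref{theo:cont} then applies verbatim with $\mathcal{G}=\mathcal{S}$ and $\mathcal{G}'=\mathcal{S}^f$, yielding that every $G\in\mathcal{S}$ has a Nash equilibrium of the associated abstract game $\langle A,(C^{d^{-1}(\{a\})})_{a\in A},(\prec^{sgp}_a)_{a\in A}\rangle$.

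The remaining step is purely a matter of unwinding definitions: a Nash equilibrium of $\langle A,(C^{d^{-1}(\{a\})})_{a\in A},(\prec^{sgp}_a)_{a\in A}\rangle$ is by definition exactly a subgame perfect equilibrium of $G$, so the previous paragraph already says that every sequential game with continuous payoff functions and finitely many choices has a subgame perfect Nash equilibrium. I would also point out explicitly where the hypotheses of Theorem~\ref{theo:cont} are used: finiteness of $C$ makes each strategy space $C^{d^{-1}(\{a\})}$ compact and, via Tychonoff, makes the product over players compact (indeed the product is just $C^{C^*}$, a power of a finite discrete space, hence a compact metrizable space); and openness of $\prec^{sgp}_a$ comes from writing it as the countable union over $\gamma\in C^*$ of the open conditions $f_a(p_\gamma(\sigma))<f_a(p_\gamma(\sigma'))$, using continuity of $(\gamma,\sigma)\mapsto p_\gamma(\sigma)$ and of each $f_a$ — all of which is supplied by Lemma~\ref{lemma:contsubgame}(1).

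\textbf{Where the work is.} I do not expect a genuine obstacle at the level of the corollary itself: all the mathematical content has been front-loaded into Theorem~\ref{theo:cont} and Lemma~\ref{lemma:contsubgame}, so the corollary is the one-line composition described above. If anything deserves a second look it is internal to Lemma~\ref{lemma:contsubgame}: the density claim (2), that payoff functions depending only on a finite prefix are dense in $\mathcal{C}(C^\omega,\uint)$ for the relevant (compact-open / uniform) topology, and the fact — needed for the "uniformity" hypothesis of Theorem~\ref{theo:cont} — that the continuous construction underlying that theorem is uniform in the payoff data, so that the topology we put on $\mathcal{S}$ really does render the game-to-preferences map continuous. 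Granting those, plus Kuhn's backward-induction argument for part (3), the proof of the corollary is exactly: apply Theorem~\ref{theo:cont} to the triple of facts supplied by Lemma~\ref{lemma:contsubgame}, and translate "Nash equilibrium of the abstract game built from $\prec^{sgp}$" back into "subgame perfect equilibrium".
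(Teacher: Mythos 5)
Your proposal is correct and is exactly the paper's intended argument: the corollary is obtained by feeding the three parts of Lemma~\ref{lemma:contsubgame} into Theorem~\ref{theo:cont} with $\mathcal{G}=\mathcal{S}$ and $\mathcal{G}'=\mathcal{S}^f$, and then observing that a Nash equilibrium with respect to the subgame-perfect preferences is by definition a subgame perfect equilibrium. No discrepancy with the paper's route.
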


We shall consider a number of variations/extensions. First, we investigate stochastic\footnote{In this, we are answering a question raised by \name{Assal\'e Adj\'e} at CSL-LiCS.} infinite sequential games with continuous payoff functions as a variation of Definition \ref{def:payout}: a stochastic infinite sequential game with real-valued payoffs is a tuple $\langle A, C, d, (f_a)_{a \in \mathbb{N}}, P\rangle$ where $A$, $C$, $f_a$ are as above, $\mathbf{n} \notin A$, $d : C^* \to A \cup \{\mathbf{n}\}$ and $P$ assigns a probability distribution over $C$ to each $w \in d^{-1}(\{\mathbf{n}\})$.

The notion of the play induced by a strategy profile is replaced by a probability distribution over plays induced by a strategy profile: essentially, we choose according to the strategy profile in vertices controlled by a player, and stochastically according to $P(w)$ in vertices controlled by $\mathbf{n}$ (nature). A player prefers a strategy profile to another one if the expected value of his payoff function regarding the former induced probability distribution exceeds the expected value regarding the latter. We obtain a notion of a subgame perfect equilibrium as before.

\begin{lemma}
\label{lemma:contstoch}
Consider stochastic infinite sequential games with continuous payoff-functions and finite choices sets. We find:
\begin{enumerate}
\item The subgame-perfect preferences produce a compact-strategies, open-preference class $\mathcal{SS}$.
\item The games with payoffs fully determined after finitely many moves are a dense subset $\mathcal{SS}^f \subseteq \mathcal{SS}$.
\item All games in $\mathcal{SS}^f$ have a Nash equilibrium.
\end{enumerate}
\begin{proof}
\begin{enumerate}
\item For a synthetic approach to continuity on probability measures etc, see \cite{collins4}. In particular, the map from strategy profiles to induced probability distributions is easily seen to be continuous; and integration is as well. The remaining argument proceeds as in Lemma \ref{lemma:contsubgame} (1).
\item The stochastic vertices do not impact the argument, so it works as in Lemma \ref{lemma:contsubgame} (2).
\item As in Lemma \ref{lemma:contsubgame} (3), we use backwards inductions: the value of a given leaf for a player is immediately obtained from the payoff function. In any vertex controlled by a player, he picks a choice guaranteeing him the optimal value, and the value of that vertex for any player $p$ is identical to the value of the corresponding child for $p$. The value of a nature vertex is the expected value of its children according to the probability distribution given by $P$. This is easily seen to yield a subgame perfect equilibrium.
\end{enumerate}
\end{proof}
\end{lemma}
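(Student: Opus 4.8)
The plan is to follow the proof of Lemma~\ref{lemma:contsubgame} closely, inserting the treatment of nature's vertices where it matters. For part (1), the first step is to observe that a strategy profile $\sigma : C^* \to C$ together with a starting history $\gamma$ now induces not a single play but a Borel probability measure $\mu_\gamma(\sigma)$ on $C^\omega$: one follows $\sigma$ at vertices owned by a player and samples according to $P(w)$ at vertices $w \in d^{-1}(\{\mathbf n\})$. I would argue that $(\gamma,\sigma) \mapsto \mu_\gamma(\sigma)$ is continuous into the space of probability measures on $C^\omega$: the measure of each cylinder $[w]$ is a finite product of transition probabilities along $w$, each factor being either a $0/1$ value determined by $\sigma$ (hence continuous in $\sigma$) or a constant coming from $P$, so the cylinder measures depend continuously on $(\gamma,\sigma)$, which suffices (cf.\ the synthetic account of probability measures and integration in \cite{collins4}). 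Composing with the continuous integration map $\mu \mapsto \int f_a \, d\mu$ for the continuous bounded payoff $f_a$ shows that $(\gamma,\sigma) \mapsto \mathbb{E}_{\mu_\gamma(\sigma)}[f_a]$ is continuous. Hence for each fixed $\gamma$ the set $\{(\sigma,\sigma') \mid \mathbb{E}_{\mu_\gamma(\sigma)}[f_a] < \mathbb{E}_{\mu_\gamma(\sigma')}[f_a]\}$ is open, and taking the countable union over $\gamma \in C^*$ gives that $\prec_a^{sgp}$ is open. Compactness and overtness of the strategy spaces $C^{d^{-1}(\{a\})}$ are immediate, since $C$ is finite and these are countable products of finite discrete spaces.

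For part (2), the density argument of Lemma~\ref{lemma:contsubgame}(2) transfers essentially verbatim: the reasoning in (1) is uniform in the continuous payoff tuple $(f_a)_a$, the presence of nature's vertices is irrelevant to approximating the $f_a$, and the functions $C^\omega \to \uint$ depending only on a finite prefix are dense in $\mathcal{C}(C^\omega,\uint)$ by exactly the same basis computation.

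For part (3), since the payoffs depend only on a bounded-length prefix, the relevant portion of the game is a finite tree whose internal nodes are of two kinds, those owned by a player and chance nodes. I would run backwards induction, assigning to each node a value vector in $\uint^A$: a leaf inherits the payoff vector $(f_a(p))_a$ for any completion $p$ of the corresponding prefix; at a node owned by player $a$ she picks a successor maximising her own coordinate and the node inherits that successor's whole value vector; at a chance node the value vector is the $P$-expectation of the children's value vectors. Reading off the choices made at player-owned nodes (and arbitrary choices beyond the relevant prefix) yields a strategy profile, and a routine check shows that no player can increase her expected payoff in any subgame, so this profile is a subgame perfect equilibrium and in particular a Nash equilibrium of the associated abstract game.

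The main obstacle is part (1): making precise, within the synthetic-topology framework, the continuity of $(\gamma,\sigma)\mapsto\mu_\gamma(\sigma)$ and of integration of the $f_a$ against the resulting measures. Once those two continuity facts are in hand, the remainder of the proof is a direct adaptation of Lemma~\ref{lemma:contsubgame}, and Theorem~\ref{theo:cont} then applies to $\mathcal{SS}' := \mathcal{SS}^f \subseteq \mathcal{SS}$.
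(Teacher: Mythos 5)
Your proposal matches the paper's argument essentially step for step: continuity of the strategy-profile-to-measure map plus continuity of integration (citing the synthetic treatment of measures), then the same density argument as in Lemma~\ref{lemma:contsubgame}(2) unaffected by chance nodes, and backwards induction with expectation taken at nature's vertices for part (3). The extra detail you give on cylinder measures for the continuity claim is a correct elaboration of what the paper leaves implicit, so the proposal is correct and takes the same route.
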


\begin{corollary}
Stochastic infinite sequential games with continuous payoff-functions and finite choices sets have subgame-perfect equilibria.
\end{corollary}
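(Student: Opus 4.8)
The plan is to combine Lemma \ref{lemma:contstoch} with Theorem \ref{theo:cont}, in exact parallel to how Corollary \ref{corr:contseque} is obtained from Lemma \ref{lemma:contsubgame}. First I would set $\mathcal{G} := \mathcal{SS}$, the class of stochastic infinite sequential games with continuous payoff functions and finite choice sets, equipped with the subgame-perfect preferences $(\prec_a^{sgp})_{a \in A}$ read with respect to the expected-payoff preferences and the probability distribution over induced plays described just before Lemma \ref{lemma:contstoch}; and I would set $\mathcal{G}' := \mathcal{SS}^f$, the subclass of those games whose payoffs are fully determined after finitely many moves. Lemma \ref{lemma:contstoch}(1) tells us that $\mathcal{SS}$ is compact-strategies, open-preferences (uniformly, i.e.\ with a topology on $\mathcal{SS}$ making the game-to-preferences map continuous), Lemma \ref{lemma:contstoch}(2) tells us that $\mathcal{SS}^f$ is a dense subclass of $\mathcal{SS}$, and Lemma \ref{lemma:contstoch}(3) tells us that every game in $\mathcal{SS}^f$ has a Nash equilibrium with respect to these preferences.

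Second, I would invoke Theorem \ref{theo:cont} on this data, with no modification, to conclude that every game in $\mathcal{SS}$ has a Nash equilibrium with respect to the subgame-perfect preferences.

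Third, I would unwind the definitions: a Nash equilibrium of $\langle A, (C^{d^{-1}(\{a\})})_{a \in A}, (\prec_a^{sgp})_{a \in A}\rangle$ is, by definition, a subgame perfect equilibrium of the underlying stochastic infinite sequential game. Hence every stochastic infinite sequential game with continuous payoff functions and finite choice sets admits a subgame perfect equilibrium, which is the claim.

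There is essentially no obstacle beyond bookkeeping here: all of the substantive work — continuity of the map from strategy profiles to induced probability distributions and of integration (via \cite{collins4}), density of the finitely-determined games among the continuous ones, and existence of subgame perfect equilibria in the finite stochastic case by backward induction through both player-controlled and nature-controlled vertices — has already been discharged inside Lemma \ref{lemma:contstoch}. The only point that warrants an explicit sentence is that the ``uniform'' hypothesis required by Theorem \ref{theo:cont} is precisely what the phrase ``produces a compact-strategies, open-preference class'' in Lemma \ref{lemma:contstoch}(1) asserts, so the two statements dovetail directly.
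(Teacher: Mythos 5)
Your proposal is correct and matches the paper's intended argument exactly: the corollary is obtained by feeding Lemma \ref{lemma:contstoch} (compactness/openness, density of $\mathcal{SS}^f$, and equilibria in $\mathcal{SS}^f$) into the transfer Theorem \ref{theo:cont}, precisely in parallel to how Corollary \ref{corr:contseque} follows from Lemma \ref{lemma:contsubgame}, and then unwinding the subgame-perfect preferences. Nothing is missing.
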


In a similar fashion we can consider multi-player multi-outcome Blackwell games \cite{blackwell,martin2} with continuous payoff functions (into $\uint$). In a Blackwell game, in each round a finite subset $A_v$ of $m$ players act jointly and probabilistic and thus determine the next vertex in the tree. In particular, we are now dealing with a tree with branching $C^{m}$ instead of just $C$. The map $d$ indicating who plays now goes into the $m$-element subsets of $A$ rather than identifying just a single player. Strategies are functions $s_a :\subseteq (C^{m})^* \to P(C)$, where $P(C)$ shall denote the set of probability distributions over $C$, strategy profiles are of the form $s :\subseteq A \times (C^m)^* \to P(C^m)$ and payoff functions go $f_a : (C^m)^\omega \to \uint$. Each strategy profile induces a probability distribution over $(C^m)^\omega$ (also from a given vertex onwards), payoffs for players are then again the expected values of their payoff functions given the induced probability distribution.

\begin{lemma}
\label{lemma:blackwell}
Consider Blackwell games with continuous payoff-functions, countably many agents and a finite choice set. We find:
\begin{enumerate}
\item The subgame-perfect preferences produce a compact-strategies, open-preference class $\mathcal{B}$.
\item The games with payoffs fully determined after finitely many moves (and thus depending only on the choices made by finitely many players) are a dense subset $\mathcal{B}^f \subseteq \mathcal{B}$.
\item All games in $\mathcal{B}^f$ have a Nash equilibrium.
\end{enumerate}
\begin{proof}
\begin{enumerate}
\item Very much like Lemma \ref{lemma:contstoch} (1). The induced probability distribution depends continuously on the strategy profile, and the expected value is a continuous map. It is unproblematic that $C$ is replaced by $\mathcal{P}(C)$ now, as this space is still compact.
\item As in Lemma~\ref{lemma:contsubgame} (2), we only need to argue that the continuous payoff functions determined by their values on $(C^{m})^n$ for some $n \in \mathbb{N}$ are dense in $\mathcal{C}((C^{m})^\omega, \uint)$.  This in turn follows just as before, taking into account the definition of the product topology.
\item We can ignore the players whose actions do not impact the payoff functions, and the moves in the game taking place after the payoffs have been determined. Thus we are dealing with a finite tree, and can use backwards induction again. The leaves of the finite tree are assigned payoff values from the payoff functions. The preceding layer of vertices now corresponds to usual normal form games (with finitely many players and finitely many choices), who have Nash equilibria by \name{Nash}' classic result \cite{nash}. Pick a Nash equilibrium for each such vertex, and assign it the corresponding outcomes as values. Then the layer preceding this one can be handled in the same way, etc. -- so the claim follows by backwards induction.
\end{enumerate}
\end{proof}
\end{lemma}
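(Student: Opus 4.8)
The plan is to mirror the three-part structure of Lemmas~\ref{lemma:contsubgame} and~\ref{lemma:contstoch}, making two substitutions throughout: the single-action alphabet $C$ is replaced by the simultaneous-move alphabet $C^A$, which is still compact by Tychonoff's theorem since $C$ is finite; and deterministic choices are replaced by distributions in the simplex $P(C)$, which is again compact because $C$ is finite. For part~(1) I would first observe that the strategy profile space is a product of copies of the compact space $P(C)$, hence compact (and, classically, overt), so the compactness requirement of Theorem~\ref{theo:cont} is immediate. The one genuine analytic ingredient is that the map sending a strategy profile $\sigma$ and a starting history $\gamma$ to the induced Borel probability measure $\mu^\gamma_\sigma$ on $(C^A)^\omega$ is continuous: $\mu^\gamma_\sigma$ is determined on cylinders by finite products of the $P(C)$-values that $\sigma$ prescribes along the relevant finite branch, and each such product depends continuously on $\sigma$ (a synthetic treatment of measures and integration is in~\cite{collins4}). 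Composing with the continuous map $\mu \mapsto \int f_a\, d\mu$ shows that $\sigma \mapsto \int f_a\, d\mu^\gamma_\sigma$ is continuous, so for each fixed $\gamma$ the set of pairs with a strict payoff gain for $a$ is open; taking the union over $\gamma$, as with the countable union in Lemma~\ref{lemma:contsubgame}~(1), shows $\prec_a^{sgp}$ is open uniformly in the game, so $\mathcal{B}$ is compact-strategies, open-preferences.

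For part~(2), by the uniformity in the $f_a$ already used in~(1) it suffices to show that the payoff functions $f : (C^A)^\omega \to \uint$ depending only on finitely many rounds and only on the coordinates of finitely many players are dense in $\mathcal{C}((C^A)^\omega, \uint)$. Unfolding $(C^A)^\omega \cong C^{A \times \mathbb{N}}$, every basic open set of the product topology constrains only finitely many pairs $(a, n) \in A \times \mathbb{N}$, so repeating the computation of Lemma~\ref{lemma:contsubgame}~(2) with such a finite index set in place of a prefix length, a locally constant function subordinate to a finite subcover both witnesses density and manifestly lies in $\mathcal{B}^f$.

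For part~(3), I would discard the players whose actions are irrelevant to the finitely many payoff functions together with the rounds played after the payoffs are already fixed; this leaves a finite tree in which every internal node is a one-shot game among finitely many players, each with finitely many actions, that is, a finite normal-form game. Backwards induction then applies: leaves carry the payoff vectors read off the $f_a$, and at an internal node I invoke \name{Nash}'s theorem~\cite{nash} to select a mixed equilibrium of the one-shot game whose payoff entries are the vectors already assigned to the children, letting the node inherit the resulting expected payoff vector. The profile so obtained is a Nash equilibrium in every subgame, hence subgame perfect; combined with Theorem~\ref{theo:cont} applied to the dense subclass $\mathcal{B}^f \subseteq \mathcal{B}$, this gives that every game in $\mathcal{B}$ has a subgame perfect, in particular a Nash, equilibrium.

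I expect part~(3) to be the main obstacle. Unlike in Lemmas~\ref{lemma:contsubgame} and~\ref{lemma:contstoch}, where each node was a single-player maximization, here a node is a genuine strategic interaction, so one must appeal to \name{Nash}'s existence theorem rather than to a trivial optimum, and one must verify that ``a Nash equilibrium at every node'' really does yield subgame perfection once randomization over successors is present. This is standard for finite-horizon simultaneous-move stochastic games, but the point that a unilateral deviation at a node influences the continuation only through the induced distribution on the already-solved subtrees --- which is precisely what the stagewise Nash property controls --- deserves to be spelled out rather than left to the reader.
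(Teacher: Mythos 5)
Your proposal is correct and follows essentially the same route as the paper: compactness of the profile space via Tychonoff, openness of $\prec_a^{sgp}$ through continuity of the profile-to-measure map composed with integration, density via payoff functions depending on finitely many rounds and players, and backwards induction with Nash's theorem applied at each stage of the resulting finite tree. Your closing remark about verifying that stagewise equilibria compose into a subgame perfect profile is a point the paper leaves implicit, but it is not a divergence in method.
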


\begin{corollary}[(\footnote{This extends \cite[Theorem 6.1]{fudenberg} from finitely many players to countably many players.})]
Any multi-outcome Blackwell game with countably many players and continuous payoff functions has a subgame-perfect Nash equilibrium.
\end{corollary}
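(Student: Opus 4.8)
The plan is to derive the corollary as an immediate consequence of the preceding lemma together with Theorem~\ref{theo:cont}. Concretely, I would let $\mathcal{G}$ be the class $\mathcal{B}$ of Blackwell games with continuous payoff functions, countably many agents, and a finite choice set $C$, each such game being regarded as an abstract game via its subgame-perfect preferences $(\prec_a^{sgp})$ on the strategy spaces $\mathbf{S}_a = \mathcal{C}((C^A)^*, P(C))$. Part~(1) of the lemma says exactly that this $\mathcal{B}$ is compact-strategies, open-preferences: the strategy spaces are compact (countable products of $P(C)$, hence compact by Tychonoff) and overt, the preference relations are open, and they depend continuously on the game. Part~(2) provides a dense subclass $\mathcal{B}^f \subseteq \mathcal{B}$, namely the games whose payoffs are fixed after finitely many rounds and depend on the choices of only finitely many players. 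Part~(3) states that every game in $\mathcal{B}^f$ has a Nash equilibrium with respect to its subgame-perfect preferences, which by definition is a subgame-perfect equilibrium of that Blackwell game.

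With these three ingredients in hand, Theorem~\ref{theo:cont} applies verbatim with $\mathcal{G} = \mathcal{B}$ and $\mathcal{G}' = \mathcal{B}^f$: since $\mathcal{B}^f$ is dense in $\mathcal{B}$ and every game in $\mathcal{B}^f$ has a Nash equilibrium, every game in $\mathcal{B}$ has a Nash equilibrium. Unfolding the definition of the subgame-perfect preferences once more, a Nash equilibrium of $\langle A, (\mathbf{S}_a)_{a \in A}, (\prec_a^{sgp})_{a \in A}\rangle$ is precisely a subgame-perfect (Nash) equilibrium of the underlying Blackwell game, which is the assertion of the corollary.

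Since essentially all of the substantive work is already packaged into the lemma — and, through it, into Theorem~\ref{theo:cont}, \name{Nash}'s theorem \cite{nash}, and the synthetic-topology machinery of \cite{collins4,pauly-synthetic-arxiv} — there is no real obstacle at this final step. The only point requiring a moment's care is the bookkeeping: one must read ``Nash equilibrium'' in the conclusion of Theorem~\ref{theo:cont} with respect to the subgame-perfect preferences $(\prec_a^{sgp})$ that were used to build the class $\mathcal{B}$, so that it translates back into the subgame-perfect notion on Blackwell games rather than into the plain Nash notion. This is the same bridge already used to pass from Lemma~\ref{lemma:contsubgame} and Theorem~\ref{theo:cont} to Corollary~\ref{corr:contseque}, so the present argument is a direct parallel of that one.
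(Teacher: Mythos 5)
Your proposal is correct and matches the paper's (implicit) argument exactly: the corollary is obtained by feeding the three parts of the preceding Blackwell-game lemma into Theorem~\ref{theo:cont} with $\mathcal{G} = \mathcal{B}$ and $\mathcal{G}' = \mathcal{B}^f$, and then reading ``Nash equilibrium for the subgame-perfect preferences'' as ``subgame-perfect equilibrium'', just as in the passage from Lemma~\ref{lemma:contsubgame} to Corollary~\ref{corr:contseque}. The only cosmetic slip is describing the strategy space as a countable product of copies of $P(C)$ (when $A$ is infinite the index set $(C^A)^*$ is uncountable), but since Tychonoff applies regardless and you are in any case only citing part~(1) of the lemma, this does not affect the argument.
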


The arguments for Lemma \ref{lemma:blackwell} remain unaffected if the number of agents playing at a particular vertex varies with the vertex (this just increases notational complexity). We can even let all players act at each time, as long as we then restrict their strategies to continuously depend on the choices (note that this would produce a tree with branching factor $2^{\aleph_0}$). We can also turn one player into a designated nature player, which plays according to some prespecified strategy rather than according to preferences, and thus obtain:

\begin{corollary}
Any multi-outcome stochastic Blackwell game with countably many players and continuous payoff functions has a subgame-perfect Nash equilibrium.
\end{corollary}

Concurrent games (in the sense of \name{Winskel}) may be another interesting class of games to apply this approach to. Determinacy questions of these games were discussed in \cite{winskel,winskel2}.

\section{Infinite sequential games with infinitely ascending preferences}
\label{sec:ascendingpreferences}
As soon as the continuity requirement for the payoff function (or, more generally, the openness of the preferences) is dropped, Nash equilibria may fail to exist. We provide a generic folklore counterexample, and will proceed to demonstrate that the underlying feature is essential for the failure of existence of Nash equilibria. The counterexample only requires a single player, and its payoff function is in a sense \emph{the least discontinuous} payoff function, and in particular is $\Delta_2^0$-measurable.
\begin{example}
\label{example:nonash}
Let the payoff function $P : \Cantor \to \uint$ for the single player be defined by $P(1^n0p) = \frac{n}{n+1}$ for all $p\in\Cantor$ and $P(1^\mathbb{N}) = 0$. As $P$ does not attain its supremum, the resulting game cannot have a Nash equilibrium.

\center
\begin{tikzpicture}[level distance=8mm] 
\node{a}[sibling distance=10mm]
	child{node{$\frac{1}{2}$}[sibling distance=10mm]}
    child{node{a}[sibling distance=10mm]
		child{node{$\frac{2}{3}$}}
		child{node{a} [sibling distance=10mm] edge from parent[dashed]
			child{node{$\frac{n}{n+1}$} edge from parent[solid]} 
			child{node{} edge from parent[dashed]
				child{node{}edge from parent[draw=none]}
				child{node{$0$} edge from parent[dashed]}
			}
		}	
	};
\end{tikzpicture}
\end{example}

We proceed to show in particular that the presence of a converging sequence of plays $(p^n)_{n \in \mathbb{N}}$ such that a player prefers $p^{n+1}$ to $p^n$, but prefers any $p^n$ to $\lim_{i \to \infty} p^i$, is a crucial feature of the example above to have no Nash equilibrium. The proof will be an adaption of the main result of \cite{leroux3} by the first author. Under the additional assumption of antagonistic preferences in a two-player game, we can even obtain subgame perfect equilibria.

In this section the preferences of the players are restricted to strict weak orders, so we recall their definition below.
\begin{definition}[Strict weak order]
 \label{def:strictweakorder}
A relation $\prec$ is called a \emph{strict weak order} if it satisfies:
\[\begin{array}{l@{\quad}l}
\forall x,\quad \neg(x\prec x)  \\
\forall x,y,z, \quad x\prec y \,\wedge\, y\prec z\,\Rightarrow\, x\prec z  \\
\forall x,y,z, \quad \neg(x\prec y) \,\wedge\, \neg(y\prec z)\,\Rightarrow\, \neg(x\prec z)
\end{array}\]
\end{definition}

Definition~\ref{defn:ag-IAC} below slightly rephrases Definitions 2.3 and 2.5 from~\cite{leroux3}: the guarantee of a player is the smallest set of outcomes that is upper-closed w.r.t. the strict-weak-order preference of the player and includes every incomparability classe (of the preference) that contains any outcome compatible with a given strategy of the player in the subgame at a given node of a given infinite sequential game. The best guarantee of a player consists of the intersection of all her guarantees over the set of strategies.

\begin{definition}[Agent (best) guarantee]\label{defn:ag-IAC}
 Let $\langle A,C,d,O,v,(\prec_a)_{a\in A}\rangle$ be a game where the $\prec_a$ are strict weak orders.
\[\begin{array}{l}
\forall a\in A,\forall \gamma\in C^*,\forall s : d^{-1}(a) \to C,\quad g_a(\gamma,s):=\\
 \{o\in O\,\mid\,\exists p\in P(s|_{\gamma C^\omega}) \cap \gamma C^\omega,\, \neg (o \prec_a v(p))\}\\
G_a(\gamma):=\bigcap_{s}g_a(\gamma,s)
\end{array}\]
We write $g_a(s)$ and $G_a$ instead of $g_a(\gamma,s)$ and $G_a(\gamma)$ when $\gamma$ is the empty word.
\end{definition}

Lemma 2.4. from~\cite{leroux3} still holds without major changes in the proofs, so we do not display it, but note that when speaking about $\prec_a$-terminal intervals (which are upper-closed sets), we now actually refer to the terminal intervals of the lift of $\prec_a$ from outcomes to the equivalence classes of outcomes induced by the strict weak order. Also, we collect some more useful facts in Observation~\ref{obs:G} below.

\begin{observation}\label{obs:G}
Let $\langle A,C,O,d,v,(\prec_a)_{a\in A}\rangle$, let $a\in A$, assume that $\prec_a$ is a strict weak order, and let $\gamma\in C^*$.
\begin{enumerate}
\item $d(\gamma) \neq a \,\Rightarrow\, G_a(\gamma) = \cup_{c\in C} G_a(\gamma\cdot c)$
\item $d(\gamma) = a \,\Rightarrow\, G_a(\gamma) = \cap_{c\in C} G_a(\gamma\cdot c)$
\item $d(\gamma) = a \,\wedge\, |C|<\infty \,\Rightarrow\,\exists c\in C,\, G_a(\gamma) = G_a(\gamma\cdot c)$
\end{enumerate}

\begin{proof}
For example, for $2.$ note that $G_a(\gamma) = \cap_{s}g_a(\gamma,s) = \cap_{c\in C} \cap_{s(\gamma)=c} g_a(\gamma,s)=\cap_{c\in C}\cap_{s}g_a(\gamma\cdot c,s) \\ =\cap_{c\in C}G_a(\gamma\cdot c)$.
\end{proof}
\end{observation}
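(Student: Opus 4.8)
The plan is to reduce all three statements to one bookkeeping fact about the guarantee map, plus a little order theory for the last part. The first thing I would record is \emph{locality}: $g_a(\gamma,s)$ depends only on the restriction of $s$ to $d^{-1}(a)\cap \gamma C^*$, since $P(s|_{\gamma C^\omega})\cap\gamma C^\omega$ — and hence the family of outcomes $v(p)$ used to build $g_a(\gamma,s)$ — is unaffected by the values of $s$ off $\gamma C^*$. The second thing is the \emph{child decomposition}: every $p\in P(s|_{\gamma C^\omega})\cap\gamma C^\omega$ runs through exactly one child $\gamma c$ of $\gamma$, so
\[P(s|_{\gamma C^\omega})\cap\gamma C^\omega \ =\ \bigcup_{c\in C}\left(P(s|_{\gamma c C^\omega})\cap\gamma c C^\omega\right),\]
whence $g_a(\gamma,s)=\bigcup_{c\in C}g_a(\gamma c,s)$ in general, while if $d(\gamma)=a$ the strategy $s$ forces the child to be $s(\gamma)$ and the union collapses to the single term $g_a(\gamma,s)=g_a(\gamma s(\gamma),s)$.

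For item~1, $d(\gamma)\neq a$ means the subtrees $\gamma c C^*$ partition $d^{-1}(a)\cap\gamma C^*$, so modulo its irrelevant part a strategy $s$ \emph{is} a tuple $(s_c)_{c\in C}$ with $s_c$ a strategy in the subtree below $\gamma c$, and by locality $g_a(\gamma c,s)$ depends only on $s_c$. Hence
\[G_a(\gamma)\ =\ \bigcap_{(s_c)_c}\ \bigcup_{c\in C}g_a(\gamma c,s_c)\ =\ \bigcup_{c\in C}\ \bigcap_{s_c}g_a(\gamma c,s_c)\ =\ \bigcup_{c\in C}G_a(\gamma c),\]
the middle step being the set-theoretic identity $\bigcap_{g\in\prod_c S_c}\bigcup_c X_{c,g(c)}=\bigcup_c\bigcap_{s\in S_c}X_{c,s}$ ($\supseteq$ is immediate; for $\subseteq$, if $o$ lies in no $G_a(\gamma c)$, pick for each $c$ some $s_c$ with $o\notin g_a(\gamma c,s_c)$ and assemble them into one strategy). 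For item~2, $d(\gamma)=a$: group the strategies $s$ by the value $c:=s(\gamma)$; since the defining sets $g_a(\gamma c,s)$ do not see $s(\gamma)$ (because $\gamma\notin\gamma c C^*$), we get $\bigcap_{s(\gamma)=c}g_a(\gamma c,s)=G_a(\gamma c)$, and therefore $G_a(\gamma)=\bigcap_{c}\bigcap_{s(\gamma)=c}g_a(\gamma c,s)=\bigcap_{c}G_a(\gamma c)$, which is exactly the chain of equalities already displayed in the statement.

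For item~3, $d(\gamma)=a$, so $G_a(\gamma)=\bigcap_{c\in C}G_a(\gamma c)$ by item~2. Each $g_a(\gamma c,s)$ is by construction a $\prec_a$-terminal interval — an up-set that is a union of $\prec_a$-incomparability classes — and terminal intervals are closed under intersection (cf.\ Lemma~2.4 of~\cite{leroux3}), so each $G_a(\gamma c)$ is a terminal interval. Since $\prec_a$ is a strict weak order, its terminal intervals are the preimages of up-sets of the induced linear order on incomparability classes, hence are linearly ordered by inclusion; with $C$ finite, the finite family $\{G_a(\gamma c)\}_{c\in C}$ has a $\subseteq$-smallest member $G_a(\gamma c^\ast)$, and $G_a(\gamma)=\bigcap_c G_a(\gamma c)=G_a(\gamma c^\ast)$. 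The calculations are routine; the points that need care are (i) in item~1, that exchanging $\bigcap$ and $\bigcup$ uses the special identity above rather than the ordinary infinitary distributive law, and relies on the strategy space below $\gamma$ genuinely being the product of the strategy spaces below its children — which is precisely what $d(\gamma)\neq a$ secures; and (ii) in item~3, that finiteness of $C$ is indispensable (an infinite descending chain of terminal intervals need not contain its intersection) and that the linear ordering of terminal intervals by inclusion is where the strict-weak-order hypothesis of this section is actually used.
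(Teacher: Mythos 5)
Your proof is correct and takes essentially the same approach as the paper: for item 2 you reproduce exactly the paper's chain of equalities (grouping strategies by their value at $\gamma$ and using that $g_a(\gamma\cdot c,s)$ only depends on $s$ below $\gamma\cdot c$), which is the only item the paper proves explicitly. Your fill-ins for the omitted items are sound as well --- the $\bigcap/\bigcup$ exchange over the product of sub-strategies for item 1, and for item 3 the fact that guarantees are $\prec_a$-terminal intervals and hence linearly ordered by inclusion, which is precisely the observation the paper itself invokes in the footnote to Lemma \ref{lem:max-guarant0}.
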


This section's proofs of existence of equilibrium rely on each player having a (minimax-style) optimal strategy if all other players team up against her. Lemma~\ref{lem:max-guarant0} below provides a sufficient condition for such strategies to exist, \textit{i.e.} for the best guarantee to be witnessed.

\begin{lemma}\label{lem:max-guarant0}
Let $\langle A,C,O,d,v,(\prec_a)_{a\in A}\rangle$ be a game where $C$ is
finite, let $a\in A$, and let us assume the following.
\begin{enumerate}
\item $\prec_a$ is a strict weak order.
\item\label{cond:max-guarant02} For every play $p\in C^\omega$, increasing $\varphi: \mathbb{N} \to \mathbb{N}$, and sequence $(s_n)_{n\in\mathbb{N}}$ of strategies for Player $a$, if $g_a(p_{<\varphi(n+1)},s_{n+1}) \subsetneq g_a(p_{<\varphi(n)},s_n)$ for all $n\in\mathbb{N}$, then $v(p)\in \cap_{n\in\mathbb{N}}g_a(p_{<\varphi(n)},s_n)$.
\end{enumerate}
Then for all $\gamma\in C^*$ there exists $s\in S_a$ such that $g_a(\gamma,s)=G_a(\gamma)$.

\begin{proof}
Wlog we only prove that there exists $s\in S_a$ such that $g_a(s)=G_a$, \textit{i.e.} where the $\gamma$ from the claim is the empty word. Let $s_0:\, d^{-1}(a) \to C$ be a strategy for Player $a$ and let us build inductively a sequence $(s_n)_{n\in\mathbb{N}}$ of strategies for Player $a$, as follows, where case $3.$ implicitly invokes Observation~\ref{obs:G}.
\begin{itemize}
\item Let $s_{n+1}|_{C^{<n}} := s_{n}|_{C^{<n}}$.
\item For all $\gamma\in C^{n}\backslash d^{-1}(a)$, let $s_{n+1}|_{\gamma
C^*} := s_{n}|_{\gamma C^*}$.
\item For all $\gamma\in C^{n} \cap d^{-1}(a)$,
\begin{enumerate}
\item if $g_a(\gamma,s_n) \subseteq G_a$ then let $s_{n+1}|_{\gamma C^*}
: = s_{n}|_{\gamma C^*}$,
\item if $G_a \subsetneq g_a(\gamma,s_n)$ and there exists
$\mu:\,d^{-1}(a) \cap \gamma C^* \to C$ such that $g_a(\gamma,\mu)
\subseteq G_a$, let $s_{n+1}|_{\gamma C^*} := \mu$,
\item otherwise\footnote{Note that due to the properties of a strict weak order, the sets of the form $g_a(\gamma,s)$ and $G_a(\gamma)$ are linearly ordered by inclusion $\subseteq$. Thus, $G_a \subsetneq g_a(\gamma,s_n)$ holds in this case, too.} let $s_{n+1}(\gamma) := c$ such that
$G_a(\gamma\cdot c)=G_a(\gamma)$, and let $s_{n+1}|_{\gamma C C^*} :=
s_{n}|_{\gamma C C^*}$.
\end{enumerate}
\end{itemize}
Let $s$ be the limit strategy of the sequence $(s_n)_{n\in\mathbb{N}}$ and first note that, using Observation~\ref{obs:G}, one can prove by induction on $\gamma$ that $G_a(\gamma) \subseteq G_a$ for every
$\gamma\in C^*$ that is compatible with $s$. Next, let $p\in P(s)$ be a path
compatible with $s$. If $p$ has a prefix $\gamma$ that fell into Cases
$1.$ or $2.$ during the recursive construction above, then $v(p)\in G_a$, so let us now assume that case $3.$ applies at every node $p_{<n}\in d^{-1}(a)$. If such nodes are finitely many, let $p_{<n}$ be the deepest one, so $d(p_{n+1+k}) \neq a$ for all $k\in\mathbb{N}$, and $v(p)\in g_a(p_{n+1},t)$ for all strategies $t$ for $a$, so $v(p) \in G_a(p_{<n+1}) = \bigcap_{t}g_a(p_{<n+1},t)$. So $v(p)\in G_a$ since $G_a=G_a(p_{<n+1})$ by Case $3$. Let us now assume that such nodes are infinitely many. If $G_a(p_{<n}) \subsetneq G_a$ for some $p_{<n}\in d^{-1}(a)$, there exists $\mu: d^{-1}(a) \cap p_{<n}C^* \to C$ such that $G_a(p_{<n}) \subseteq g_a(p_{<n},\mu)
\subsetneq G_a$ since $G_a(p_{<n}) = \bigcap_{t}g_a(p_{<n},t)$ by
definition, which would mean that Case $1.$ or $2.$ applies; so
$G_a(p_{<n})=G_a$ for all $p_{<n}\in d^{-1}(a)$, and subsequently for all $n$. Also, the best guarantee is never
witnessed (through Case $2.$) at any node $p_{<n}\in d^{-1}(a)$,  and subsequently for all $n$. If $v(p)\notin G_a$, the previous two remarks allow us to build inductively a sequence $(t_n)_{n\in\mathbb{N}}$ of strategies for $a$ such that $v(p) \notin g_a(p_{<0},t_0)$ and $g_a(p_{<n+1},t_{n+1}) \subsetneq g_a(p_{<n},t_n)$ for all $n\in\mathbb{N}$, which would imply $v(p)\in \cap_{n\in\mathbb{N}}g_a(p_{<n},t_n)$ by assumption of the lemma, contradiction.
\end{proof}
\end{lemma}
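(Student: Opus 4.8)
The plan is to construct, once and for all, a single optimal strategy $s$ by a greedy limit process, and then to verify $g_a(\gamma,s)=G_a(\gamma)$; only the inclusion $g_a(\gamma,s)\subseteq G_a(\gamma)$ carries content, since $G_a(\gamma)=\bigcap_{s'}g_a(\gamma,s')\subseteq g_a(\gamma,s)$ is immediate. As every subgame of an infinite sequential game is again one, and hypotheses~(1) and~(\ref{cond:max-guarant02}) pass to subgames — for the subgame rooted at $\gamma$, condition~(\ref{cond:max-guarant02}) is just its own restriction to plays through $\gamma$ — it suffices to treat $\gamma=\varepsilon$. Throughout I use the elementary facts that each $g_a(\delta,t)$ and each $G_a(\delta)$ is upper-closed for the class order induced by $\prec_a$, so that any family of such sets is linearly ordered by inclusion (cf.~Lemma~2.4 of~\cite{leroux3}), that $g_a(\delta\cdot c,t)\subseteq g_a(\delta,t)$ whenever $d(\delta)\neq a$ or $c=t(\delta)$, and Observation~\ref{obs:G}, notably that an $a$-controlled node with $C$ finite has a child $c$ with $G_a(\delta\cdot c)=G_a(\delta)$.

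Starting from an arbitrary $s_0\in S_a$, I build $s_{n+1}$ from $s_n$ by keeping $s_n$ on all nodes of depth $<n$ and on the subtrees below non-$a$ nodes of depth $n$, and, at each $a$-controlled $\delta\in C^n$, performing the first applicable of: (1) if $g_a(\delta,s_n)\subseteq G_a$, keep $s_n$ on the entire subtree at $\delta$; (2) otherwise, if some strategy $\mu$ on that subtree has $g_a(\delta,\mu)\subseteq G_a$, adopt $\mu$ there; (3) otherwise, reroute the move at $\delta$ to a child $c$ with $G_a(\delta\cdot c)=G_a(\delta)$ (available by Observation~\ref{obs:G}, since $C$ is finite) and leave $s_n$ untouched strictly below $\delta$. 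The sequence stabilises on every finite initial segment of $C^*$, so $s:=\lim_n s_n$ is a well-defined strategy, and the move $s(\delta)$ at an $a$-node is never revised after $\delta$ has been processed. A preliminary induction on $|\delta|$ then shows $G_a(\delta)\subseteq G_a$ for every $\delta$ compatible with $s$: at a non-$a$ node use Observation~\ref{obs:G}(1); at an $a$-node, the committed subtree strategy $\hat s$ at $\delta$ satisfies either $g_a(\delta,\hat s)\subseteq G_a$ (cases~(1) and~(2)), whence $G_a(\delta\cdot s(\delta))\subseteq g_a(\delta\cdot s(\delta),\hat s)\subseteq g_a(\delta,\hat s)\subseteq G_a$, or $G_a(\delta\cdot s(\delta))=G_a(\delta)$ (case~(3)).

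To prove $g_a(s)\subseteq G_a$ it is enough, $G_a$ being upper-closed, to show $v(p)\in G_a$ for every $p\in P(s)$. Fix $p$ and distinguish cases by its $a$-controlled prefixes. If some such prefix $\delta$ was handled by case~(1) or~(2), its move is frozen as $\hat s(\delta)$, the resulting child again satisfies $g_a(\,\cdot\,,\hat s)\subseteq G_a$ and so is handled by case~(1), and inductively on down the play; hence $p$ follows $\hat s$ from $\delta$ onward and $v(p)\in g_a(\delta,\hat s)\subseteq G_a$. Otherwise every $a$-prefix of $p$ was handled by case~(3). If there are only finitely many of them (possibly none), let $p_{<N}$ be the deepest, so that no $a$-move remains after $p_{<N+1}$; then $p\in P(t|_{p_{<N+1}C^\omega})\cap p_{<N+1}C^\omega$ for every $t\in S_a$, giving $v(p)\in\bigcap_t g_a(p_{<N+1},t)=G_a(p_{<N+1})\subseteq G_a$ by the preliminary. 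The remaining case, where infinitely many $a$-prefixes of $p$ — say at depths $n_0<n_1<\cdots$ — were all handled by case~(3), is the heart of the argument. Since case~(3), not~(2), fired at each $p_{<n_k}$, no strategy $t$ satisfies $g_a(p_{<n_k},t)\subseteq G_a$; with the preliminary this forces $G_a(p_{<n_k})=G_a$ and $g_a(p_{<n_k},t)\supsetneq G_a$ for all $t$. Suppose $v(p)\notin G_a$. Pick $t_0$ with $v(p)\notin g_a(p_{<n_0},t_0)$ (possible as $v(p)\notin G_a(p_{<n_0})=\bigcap_t g_a(p_{<n_0},t)$); inductively, since $\bigcap_t g_a(p_{<n_{k+1}},t)=G_a\subsetneq g_a(p_{<n_k},t_k)$, some $t_{k+1}$ makes $g_a(p_{<n_{k+1}},t_{k+1})$ fail to contain $g_a(p_{<n_k},t_k)$, hence — both being upper-closed — a proper subset of it, so it still excludes $v(p)$. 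Then hypothesis~(\ref{cond:max-guarant02}), applied to the play $p$, the increasing map $\varphi(k):=n_k$, and the strategies $t_k$, yields $v(p)\in\bigcap_k g_a(p_{<n_k},t_k)\subseteq g_a(p_{<n_0},t_0)$, contradicting the choice of $t_0$. Hence $v(p)\in G_a$, completing the argument.

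I expect the main obstacle to be this last case: manufacturing a genuinely strictly $\subsetneq$-decreasing chain of guarantee sets along $p$ — which is exactly where the ``no subtree strategy reaches into $G_a$'' clause of case~(3) and the linear ordering of guarantee sets by inclusion get used — and recognising that hypothesis~(\ref{cond:max-guarant02}) is precisely the statement forbidding such a chain from leaving $v(p)$ outside its intersection. A secondary, fiddly point is the bookkeeping of the first case, namely that a subtree committed via~(1) or~(2) is never disturbed along the plays it induces.
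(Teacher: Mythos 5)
Your proposal is correct and follows essentially the same route as the paper's own proof: the identical three-case greedy construction of the limit strategy, the preliminary induction that $G_a(\delta)\subseteq G_a$ along $s$-compatible histories, the same case split on plays (a case-1/2 prefix, finitely many case-3 nodes, infinitely many case-3 nodes), and the same final contradiction via a strictly $\subsetneq$-decreasing chain of guarantees fed into hypothesis~(\ref{cond:max-guarant02}). You merely spell out a few steps the paper leaves terse (the frozen-subtree bookkeeping, why case 3 forces $G_a(p_{<n_k})=G_a$, and the explicit choice of the chain $(t_k)$ indexed by an increasing $\varphi$), which if anything matches the lemma's hypothesis more precisely than the paper's closing sentence.
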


More specifically, this section's proofs of existence of equilibrium rely on each player having a strategy that is (minimax-style) optimal for every subgame at once. Lemma~\ref{lem:max-guarant1} below gives a sufficient condition for such strategies to exists. It relies on Condition~\ref{cond:max-guarant13}, which is also the conclusion in Lemma~\ref{lem:max-guarant0}, to perform a key quantifier inversion. Note that Condition~\ref{cond:max-guarant12} in Lemma~\ref{lem:max-guarant1} is weaker than that in Lemma~\ref{lem:max-guarant0}, and that finiteness of $C$ is used in Lemma~\ref{lem:max-guarant0} only. It is the main reason why Lemmas~\ref{lem:max-guarant0} and \ref{lem:max-guarant1} are not merged.

\begin{lemma}\label{lem:max-guarant1}
Let $\langle A,C,O,d,v,(\prec_a)_{a\in A}\rangle$ be a game, let $a\in A$,
and let us assume the following.
\begin{enumerate}
\item\label{cond:max-guarant11} $\prec_a$ is a strict weak order.
\item\label{cond:max-guarant12} For every play $p\in C^\omega$ and increasing $\varphi: \mathbb{N} \to \mathbb{N}$, if $d(p_{<\varphi(n)}) = a$ and $G_a(p_{<\varphi(n+1)}) \subsetneq G_a(p_{<\varphi(n)})$ for all $n\in\mathbb{N}$, then $v(p)\in \cap_{n\in\mathbb{N}}G_a(p_{<\varphi(n)})$.
\item\label{cond:max-guarant13} For all $\gamma\in C^*$ there exists $s\in S_a$ such that
$g_a(\gamma,s)=G_a(\gamma)$.
\end{enumerate}
Then there exists $s$ such that $g_a(\gamma,s)=G_a(\gamma)$ for all
$\gamma\in C^*$.

\begin{proof}
We proceed similarly as in the proof of Lemma~\ref{lem:max-guarant0}. Let
$s_0$ be a strategy for Player $a$ and let us build inductively a sequence
$(s_n)_{n\in\mathbb{N}}$ of strategies for Player $a$. The recursive
definition below is different from the one in the proof of
Lemma~\ref{lem:max-guarant0} in three respects: the three occurrences of
$G_a$ in Cases $1.$ and $2.$ are replaced with $G_a(\gamma)$. Case $3.$ is
deleted since it never applies by assumption. Finally, two inclusions are
replaced with equalities.
\begin{itemize}
\item Let $s_{n+1}|_{C^{<n}} := s_{n}|_{C^{<n}}$
\item For all $\gamma\in C^{n}\backslash d^{-1}(a)$, let $s_{n+1}|_{\gamma
C^*} := s_{n}|_{\gamma C^*}$.
\item For all $\gamma\in C^{n} \cap d^{-1}(a)$,
\begin{enumerate}
\item if $g_a(\gamma,s_n) = G_a(\gamma)$ then let $s_{n+1}|_{\gamma C^*}
: = s_{n}|_{\gamma C^*}$,
\item if $G_a(\gamma) \subsetneq g_a(\gamma,s_n)$, let $s_{n+1}|_{\gamma
C^*} := \mu$ where $\mu:\,d^{-1}(a) \cap \gamma C^* \to C$ is such that
$g_a(\gamma,\mu) = G_a(\gamma)$.
\end{enumerate}
\end{itemize}
Let $s$ be the limit strategy of the sequence $(s_n)_{n\in\mathbb{N}}$ and first note that, using Observation~\ref{obs:G}, one can prove by induction on $\gamma$ that $G_a(\gamma) \subseteq G_a$ for every
$\gamma\in C^*$ that is compatible with $s$. Next, let $p\in P(s)$ be a path
compatible with $s$. Due to the uniformity of the recursive definition,
it suffices to show that $v(p)\in G_a$ to prove the full statement.

If Case $2.$ applies only finitely many times
in the construction of $s$, the sequence $(s_n|_{\{\gamma\in C^*\,\mid\,p\in \gamma C^\omega\}})_{n\in\mathbb{N}}$ is eventually
constant, so $v(p)\in g_a(p_{<n},s_n) = G_a(p_{<n}) \subseteq G_a$ for some
$n$. Otherwise, there exists an increasing function $\varphi:\,\mathbb{N}
\to \mathbb{N}$  with $d(p_{<\varphi(n)}) = a$ and $G_a(p_{<\varphi(n+1)}) \subsetneq
G_a(p_{<\varphi(n)})$ for all $n\in \mathbb{N}$, so $v(p)\in  \cap_{n\in\mathbb{N}}G_a(p_{<\varphi(n)}) \subseteq  G_a(p_{<\varphi(0)}) \subseteq G_a$.
\end{proof}
\end{lemma}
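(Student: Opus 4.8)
The content of the statement is a quantifier inversion: Condition~\ref{cond:max-guarant13} (the conclusion of Lemma~\ref{lem:max-guarant0}) provides, for each node $\gamma$ in isolation, a strategy achieving the best guarantee $G_a(\gamma)$, and we must produce a \emph{single} strategy $s$ achieving $G_a(\gamma)$ at every $\gamma\in C^*$ simultaneously. The plan is to imitate the limit-of-strategies construction from the proof of Lemma~\ref{lem:max-guarant0}, but aiming for local optimality at every node rather than only at the root. Starting from an arbitrary $s_0\in S_a$, I would define $s_{n+1}$ from $s_n$ by keeping $s_n$ on histories of length $<n$ and on the subgames rooted at length-$n$ nodes not controlled by $a$, and, at each length-$n$ node $\gamma$ with $d(\gamma)=a$, keeping $s_n$ on $\gamma C^*$ whenever $g_a(\gamma,s_n)=G_a(\gamma)$ already holds and otherwise overwriting $s_n$ on $\gamma C^*$ by (the relevant restriction of) a strategy $\mu$ with $g_a(\gamma,\mu)=G_a(\gamma)$, whose existence is exactly Condition~\ref{cond:max-guarant13}. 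Since $s_{n+1}$ and $s_n$ agree on $C^{<n}$, the pointwise limit $s$ is well defined. Compared with Lemma~\ref{lem:max-guarant0}, the third case (where one only preserves the guarantee rather than decreasing it) is no longer needed, because Condition~\ref{cond:max-guarant13} always supplies the witness $\mu$, and the inclusions in that proof become equalities here.

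Two structural remarks then organise the verification. First, Condition~\ref{cond:max-guarant11} ($\prec_a$ a strict weak order) makes the sets $g_a(\gamma,t)$ and $G_a(\gamma)$ totally ordered by inclusion, so the case split in the construction is exhaustive, and overwriting a subgame by a locally better sub-strategy can only shrink the guarantees at its ancestors. Second, an induction on $\gamma$ using Observation~\ref{obs:G} shows $G_a(\gamma)\subseteq G_a$ for every $\gamma$ compatible with the limit strategy $s$. Because the construction, restricted to the subgame at any $\gamma$, is literally the construction one would run for that subgame, it is enough by this uniformity to establish $g_a(s)=G_a$; and since $G_a\subseteq g_a(s)$ is automatic and $G_a$ is upper-closed for $\prec_a$, it even suffices to show that $v(p)\in G_a$ for every play $p\in P(s)$.

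For a fixed $p\in P(s)$ the argument splits according to how often the overwriting case is triggered at prefixes of $p$ during the construction. If it is triggered only finitely often, then $s_n$ restricted to prefixes of $p$ is eventually constant, $p$ is compatible with $s_n$ from some stage on, and at a deep enough prefix the first case applies, giving $v(p)\in g_a(p_{<n},s_n)=G_a(p_{<n})\subseteq G_a$. If it is triggered infinitely often, I extract an increasing $\varphi\colon\mathbb{N}\to\mathbb{N}$ with $d(p_{<\varphi(n)})=a$ and $G_a(p_{<\varphi(n+1)})\subsetneq G_a(p_{<\varphi(n)})$ for all $n$; Condition~\ref{cond:max-guarant12} then forces $v(p)\in\bigcap_{n\in\mathbb{N}} G_a(p_{<\varphi(n)})\subseteq G_a(p_{<\varphi(0)})\subseteq G_a$. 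In either case $v(p)\in G_a$, which completes the proof.

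The step I expect to be most delicate is the combination of the second structural remark with the reduction to $v(p)\in G_a$: one must make sure that re-optimising the game at arbitrarily deep descendants does not undo the guarantee gained at a node, which is precisely where the linear-order structure coming from the strict-weak-order hypothesis (rather than mere acyclicity) is needed; and one must check that the ``finitely many overwrites along $p$'' case really yields the equality $g_a(p_{<n},s_n)=G_a(p_{<n})$ at a suitable stage $n$, not merely an inclusion. The dichotomy itself is the place where Condition~\ref{cond:max-guarant12} does its job, performing the advertised quantifier inversion with respect to descending chains of guarantees.
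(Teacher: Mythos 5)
Your proposal is correct and follows essentially the same route as the paper's own proof: the same stagewise overwrite construction driven by Condition~\ref{cond:max-guarant13}, the same limit strategy, the same reduction (via uniformity and Observation~\ref{obs:G}) to showing $v(p)\in G_a$, and the same dichotomy on finitely versus infinitely many overwrites along $p$, with Condition~\ref{cond:max-guarant12} handling the infinite case.
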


Theorems~\ref{thm:zs-fc-spe} and \ref{thm:zs-fo-spe} below both prove the existence of subgame perfect equilibria for antagonistic games, either when the choice set $C$ is finite, or when the outcome set $O$ is finite. Since their proofs are similar, most is factored out in Lemma~\ref{lem:abstract-zero-sum-SPE} below.

\begin{lemma}\label{lem:abstract-zero-sum-SPE}
Let $\langle \{a,b\},C,O,d,v,\{\prec,\prec^{-1}\}\rangle$ be a two-player game. Let $\Gamma\subseteq\mathcal{P}(C^\omega)$ and assume the following.
\begin{enumerate}
\item $\prec$ is a strict weak order.
\item\label{cond:azs-spe1} For every play $p\in C^\omega$ and increasing sequence $\varphi: \mathbb{N} \to \mathbb{N}$, if $d(p_{<\varphi(n)}) = a$ and $G_a(p_{<\varphi(n+1)}) \subsetneq G_a(p_{<\varphi(n)})$ for all $n\in\mathbb{N}$, then $v(p)\in \cap_{n\in\mathbb{N}}G_a(p_{<\varphi(n)})$.
\item\label{cond:azs-spe2} For all $\gamma\in C^\omega$, there exists $s$ such that $g_a(\gamma,s)=G_a(\gamma)$ (resp. $g_b(\gamma,s)=G_b(\gamma)$).
\item\label{cond:azs-spe3} For all non-empty closed $E\subseteq C^\omega$, there are $\prec$-extremal elements in $v[E]$.
\item\label{cond:azs-spe4} For every $\prec$-extremal interval $I$ and $\gamma\in C^*$, we have $\left (v^{-1}[I]\cap \gamma C^\omega \right ) \in\Gamma$.
\item\label{cond:azs-spe5} The game $\langle C,D,W\rangle$ is determined for all $W\in\Gamma$, $D \subseteq C^*$.
\end{enumerate}
Then the game $\langle \{a,b\},C,O,d,v,\{\prec,\prec^{-1}\}\rangle$ has a subgame perfect equilibrium.

\begin{proof}
By invoking Lemma~\ref{lem:max-guarant1} once for Player $a$ and once for Player $b$, let us build a strategy profile $s:C^*\to C$, such that $g_X(\gamma,s_X)= G_X(\gamma)$ for all $\gamma\in C^*$ and $X\in\{a,b\}$. Let $\gamma\in C^*$ and let us prove that $G_a(\gamma)\cap G_b(\gamma)=\{\mathrm{min}_<(G_a(\gamma))\}=\{\mathrm{max}_<(G_b(\gamma))\}$. Consider the game $\langle C,D,W\rangle$ (as in Definition \ref{def:2pwinl}) where the winning set is defined by $W:=\{\alpha\in \gamma C^{\omega}\,\mid\,v(\alpha)\in G_a(\gamma)\backslash\{\mathrm{min}_<(G_a(\gamma))\}\}$ and where Player $a$ owns exactly the nodes in $D:= (C^* \setminus \gamma C^*) \cup (d^{-1}(\{a\}) \cap \gamma C^*)$. By Assumption~\ref{cond:azs-spe4} the set $W$ is in $\Gamma$, so by Assumption~\ref{cond:azs-spe5} the game $\langle C,D,W\rangle$ is determined. By definition of the best guarantee, Player $a$ has no winning strategy for this game, so Player $b$ has a winning strategy, which means that $G_b(\gamma)\subseteq \{\mathrm{min}_<(G_a(\gamma))\}\cup O\backslash G_a(\gamma)$. Since $G_a(\gamma)\cap G_b(\gamma)$ must be non-empty, otherwise the two guarantees are contradictory, $G_a(\gamma)\cap G_b(\gamma)=\{\mathrm{min}_<(G_a(\gamma))\}$. This means that the subprofile of $s$ rooted at $\gamma$ induces the outcome $\mathrm{min}_<(G_a(\gamma))$ (which equals $\mathrm{max}_<(G_b(\gamma))$ by symmetry), and it is optimal for both players.
\end{proof}
\end{lemma}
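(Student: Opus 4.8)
The plan is to build, for each of the two players, a strategy that is optimal in every subgame at once, glue these two strategies into a single profile, and then verify subgame perfection node by node by comparing the two players' best guarantees. Concretely, the first step is to apply Lemma~\ref{lem:max-guarant1} to Player $a$ -- its three hypotheses are exactly the assumption that $\prec$ is a strict weak order, Assumption~\ref{cond:azs-spe1}, and Assumption~\ref{cond:azs-spe2} -- and then again to Player $b$, using that $\prec^{-1}$ is again a strict weak order together with the mirror images of these assumptions (the ``resp.'' clause of Assumption~\ref{cond:azs-spe2} already supplies the witnessing statement for $b$). This yields strategies $s_a$ and $s_b$ with $g_a(\gamma,s_a)=G_a(\gamma)$ and $g_b(\gamma,s_b)=G_b(\gamma)$ for all $\gamma\in C^*$, and I would set $s := s_a\cup s_b$.

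The heart of the argument is the claim that for every $\gamma\in C^*$ the set $G_a(\gamma)\cap G_b(\gamma)$ is a singleton, namely $\{\min_\prec G_a(\gamma)\}=\{\max_\prec G_b(\gamma)\}$, where the extrema are taken on $\prec$-equivalence classes. First one checks that $\min_\prec G_a(\gamma)$ exists: by Assumption~\ref{cond:azs-spe2} some strategy witnesses $G_a(\gamma)=g_a(\gamma,\cdot)$, and $g_a(\gamma,\cdot)$ is the $\prec_a$-upward closure of $v$ applied to a closed set, which has a $\prec$-least class by Assumption~\ref{cond:azs-spe3}; hence $G_a(\gamma)=\{o\in O\mid o\succeq\min_\prec G_a(\gamma)\}$. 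Then, for fixed $\gamma$, I would introduce the two-player win/lose game $\langle C,D,W\rangle$ of Definition~\ref{def:2pwinl} in which Player $a$ owns all nodes outside $\gamma C^*$ together with her own nodes inside it, i.e. $D=(C^*\setminus\gamma C^*)\cup(d^{-1}(\{a\})\cap\gamma C^*)$, and with winning set for $a$ equal to $W=v^{-1}[G_a(\gamma)\setminus\{\min_\prec G_a(\gamma)\}]\cap\gamma C^\omega$. Since $G_a(\gamma)\setminus\{\min_\prec G_a(\gamma)\}$ is a $\prec$-extremal interval, Assumption~\ref{cond:azs-spe4} puts $W$ in $\Gamma$, so by Assumption~\ref{cond:azs-spe5} the game is determined. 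Player $a$ has no winning strategy: owning the nodes outside $\gamma C^*$ lets her force the play into $\gamma C^\omega$, but winning would require a subgame strategy at $\gamma$ all of whose compatible outcomes lie strictly above $\min_\prec G_a(\gamma)$, contradicting $G_a(\gamma)=\bigcap_t g_a(\gamma,t)$. Hence Player $b$ wins, and her winning strategy exhibits $G_b(\gamma)\subseteq\{\min_\prec G_a(\gamma)\}\cup(O\setminus G_a(\gamma))$, whence $G_a(\gamma)\cap G_b(\gamma)\subseteq\{\min_\prec G_a(\gamma)\}$. Conversely, the play induced by $s$ from $\gamma$ is compatible with both $s_a$ and $s_b$, so its outcome lies in $g_a(\gamma,s_a)\cap g_b(\gamma,s_b)=G_a(\gamma)\cap G_b(\gamma)$, giving non-emptiness; this forces the intersection to be exactly $\{\min_\prec G_a(\gamma)\}$, and by the symmetric argument this class equals $\max_\prec G_b(\gamma)$.

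Finally I would read off subgame perfection. Writing $o^*_\gamma$ for this common class, the subprofile of $s$ rooted at $\gamma$ induces outcome $o^*_\gamma$. If $a$ deviates to an arbitrary $t$ while $b$ keeps $s_b$, the resulting play from $\gamma$ is still compatible with $s_b$, so its outcome lies in $G_b(\gamma)$, whose $\prec$-maximum is $o^*_\gamma$; thus $a$ cannot strictly improve at $\gamma$, and symmetrically a deviation by $b$ produces an outcome in $G_a(\gamma)$, whose $\prec$-minimum (hence $\prec^{-1}$-maximum) is $o^*_\gamma$. Since this holds for every $\gamma$, the profile $s$ is a subgame perfect equilibrium. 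The step I expect to be the main obstacle is the construction of the auxiliary win/lose game in the second paragraph: one must give Player $a$ exactly enough control (the nodes outside $\gamma C^*$) so that the question genuinely reduces to the subgame at $\gamma$, verify that ``$a$ has no winning strategy'' is precisely the defining property of the best guarantee $G_a(\gamma)$, and make sure $\min_\prec G_a(\gamma)$ is legitimately available so that $W$ is the $v$-preimage of an honest $\prec$-extremal interval covered by Assumption~\ref{cond:azs-spe4}; once the game is set up correctly, translating ``$b$ wins'' into the inclusion for $G_b(\gamma)$ and deriving non-emptiness of $G_a(\gamma)\cap G_b(\gamma)$ from the induced play should be routine.
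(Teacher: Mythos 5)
Your proposal is correct and follows essentially the same route as the paper's proof: invoke Lemma~\ref{lem:max-guarant1} for each player to get subgame-wise optimal strategies, set up exactly the same auxiliary win/lose game $\langle C,D,W\rangle$ with $D=(C^*\setminus\gamma C^*)\cup(d^{-1}(\{a\})\cap\gamma C^*)$ and $W=v^{-1}[G_a(\gamma)\setminus\{\min_\prec G_a(\gamma)\}]\cap\gamma C^\omega$, use determinacy to show $G_a(\gamma)\cap G_b(\gamma)$ is the minimum class of $G_a(\gamma)$, and read off subgame perfection. Your write-up even makes explicit two points the paper leaves terse (existence of $\min_\prec G_a(\gamma)$ via the closedness/extremality assumption, and non-emptiness of $G_a(\gamma)\cap G_b(\gamma)$ via the play induced by the combined profile), which is fine.
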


\begin{theorem}\label{thm:zs-fc-spe}
Let $\langle \{a,b\},C,O,d,v,\{\prec,\prec^{-1}\}\rangle$ be a two-player antagonistic game, where $C$ is finite. Let $\Gamma\subseteq\mathcal{P}(C^\omega)$ and assume the following.
\begin{enumerate}
\item $\prec$ is a strict weak order.
\item\label{thm:zs-fc-spe2} For every $p\in C^\omega$, sequence $(s_n)_{n\in\mathbb{N}}$ of strategies for $X\in\{a,b\}$, and increasing $\varphi: \mathbb{N} \to \mathbb{N}$, if $d(p_{<\varphi(n)}) = a$ and $g_X(p_{<\varphi(n+1)},s_{n+1}) \subsetneq g_X(p_{<\varphi(n)},s_n)$ for all $n\in\mathbb{N}$, then $v(p)\in \cap_{n\in\mathbb{N}}g_X(p_{<\varphi(n)},s_n)$.
\item For every $\prec$-extremal interval $I$ and $\gamma\in C^*$, we have $\left (v^{-1}[I]\cap \gamma C^\omega \right ) \in\Gamma$.
\item The game $\langle C,D,W\rangle$ is determined for all $W\in\Gamma$, $D \subseteq C^*$.
\end{enumerate}
Then the game $\langle \{a,b\},C,O,d,v,\{\prec,\prec^{-1}\}\rangle$ has a subgame perfect equilibrium.

\begin{proof}
By application of Lemma~\ref{lem:abstract-zero-sum-SPE}. (Note that Condition~\ref{thm:zs-fc-spe2} of Theorem~\ref{thm:zs-fc-spe} is an "upper bound" of Condition~\ref{cond:max-guarant02} of Lemma~\ref{lem:max-guarant0} and Condition~\ref{cond:max-guarant12} of Lemma~\ref{lem:max-guarant1}.) Condition~\ref{cond:azs-spe2} is proved by Lemma~\ref{lem:max-guarant0}. For Condition~\ref{cond:azs-spe3}, let $E$ be a non-empty closed subset of $C^\omega$, and let $T$ be the tree such that $[T] = E$. Consider the game where Player $a$ plays alone on $T$. Since Player $a$ can maximise her best guarantee by Lemma~\ref{lem:max-guarant0}, and since all her guarantees are singletons, $v[E]$ has a $\prec$-maximum. Likewise, it has a $\prec$-minimum, by considering Player $b$.
\end{proof}
\end{theorem}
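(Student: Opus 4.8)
The plan is to derive this directly from Lemma~\ref{lem:abstract-zero-sum-SPE}, applied to the game at hand with the given $\Gamma$, by checking its six hypotheses. Three of them require no work: hypothesis~1 of Lemma~\ref{lem:abstract-zero-sum-SPE} is hypothesis~1 of the theorem (and since $\prec^{-1}$ is again a strict weak order, everything below is symmetric in $a$ and $b$); condition~\ref{cond:azs-spe4} is hypothesis~3 of the theorem; and condition~\ref{cond:azs-spe5} is hypothesis~4. So the content is in producing conditions~\ref{cond:azs-spe1}, \ref{cond:azs-spe2} and \ref{cond:azs-spe3}.

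For condition~\ref{cond:azs-spe2} -- each player having, for every subgame, a strategy that realises her best guarantee -- I would invoke Lemma~\ref{lem:max-guarant0} once for $a$ and once for $b$. Its three hypotheses are met: $C$ is finite by assumption; $\prec_X$ is a strict weak order; and its ascending-chain condition~\ref{cond:max-guarant02} for player $X$ is exactly what hypothesis~\ref{thm:zs-fc-spe2} of the theorem provides, once specialised to $X$. Indeed \ref{thm:zs-fc-spe2} was phrased precisely so as to be an ``upper bound'' simultaneously of \ref{cond:max-guarant02} (needed now, for arbitrary strategy sequences) and of the $G$-formulation \ref{cond:max-guarant12} (needed later). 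Condition~\ref{cond:azs-spe1} then comes essentially for free, since it \emph{is} condition~\ref{cond:max-guarant12}: to obtain it from \ref{thm:zs-fc-spe2} (with $X=a$) one feeds, for each $n$, the subgame-optimal strategy $s_n$ just produced by Lemma~\ref{lem:max-guarant0}, so that $g_a(p_{<\varphi(n)},s_n)=G_a(p_{<\varphi(n)})$, whereupon the hypothesis and conclusion of \ref{thm:zs-fc-spe2} become those of \ref{cond:azs-spe1}.

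For condition~\ref{cond:azs-spe3}, fix a non-empty closed $E\subseteq C^\omega$ and let $T$ be the pruned tree with $[T]=E$. I would pass to the auxiliary one-player game in which $a$ alone moves on $T$, valued by the restriction of $v$ to $E$. In that game every strategy of $a$ induces a single play, so each set $g_a(\gamma,s)$ is the up-set (in the strict-weak-order quotient of $\prec_a$) of a single outcome; hence any strategy witnessing $a$'s best guarantee induces a play whose outcome is a $\prec_a$-maximum of $v[E]$. That the best guarantee is witnessed follows from Lemma~\ref{lem:max-guarant0} applied to this one-player game, whose ascending-chain hypothesis here says exactly that $v|_E$ admits no sequence of plays converging to a play $p$ along which the outcomes strictly $\prec_a$-ascend past $v(p)$ -- the pathological pattern this whole section (cf.\ Example~\ref{example:nonash}) is designed to exclude. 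Running the same argument with $b$ in place of $a$ (i.e.\ with $\prec^{-1}$) yields a $\prec_a$-minimum of $v[E]$, so $\prec$-extremal elements exist. All six hypotheses of Lemma~\ref{lem:abstract-zero-sum-SPE} then hold, and the game has a subgame perfect equilibrium.

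The step I expect to demand the most care is the bookkeeping between the various guarantee conditions: verifying that hypothesis~\ref{thm:zs-fc-spe2}, whose clause ``$d(p_{<\varphi(n)})=a$'' is stated asymmetrically, genuinely covers the hypothesis of Lemma~\ref{lem:max-guarant0} for \emph{both} players, and checking that the auxiliary one-player game on $T$ inherits that same hypothesis so that Lemma~\ref{lem:max-guarant0} is legitimately applicable there. One could in principle attempt a more hands-on construction of the equilibrium, but routing everything through Lemmas~\ref{lem:max-guarant0}, \ref{lem:max-guarant1} and \ref{lem:abstract-zero-sum-SPE} keeps the argument short, and once those are in hand nothing beyond the above matchings is needed.
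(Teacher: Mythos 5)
Your proposal is correct and follows essentially the same route as the paper: apply Lemma~\ref{lem:abstract-zero-sum-SPE}, obtain condition~\ref{cond:azs-spe2} from Lemma~\ref{lem:max-guarant0}, read conditions~\ref{cond:azs-spe4} and~\ref{cond:azs-spe5} off hypotheses~3 and~4, deduce condition~\ref{cond:azs-spe1} from hypothesis~\ref{thm:zs-fc-spe2} via the guarantee-realising strategies, and verify condition~\ref{cond:azs-spe3} by the one-player game on the tree of a closed set, where singleton guarantees yield a $\prec$-maximum and, symmetrically, a $\prec$-minimum. Your write-up even makes explicit the bookkeeping (the ``upper bound'' role of hypothesis~\ref{thm:zs-fc-spe2}) that the paper only gestures at.
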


\begin{theorem}\label{thm:zs-fo-spe}
Let $\langle \{a,b\},C,d,O,v,\{<,<^{-1}\}\rangle$ be an infinite sequential game where $O$ is finite and $<$ is a strict linear order. Let $\Gamma\subseteq\mathcal{P}(C^\omega)$ and assume the following.
\begin{enumerate}
\item\label{hyp:wo-ne-stable} $\forall O'\subseteq O,\forall\gamma\in C^*,\quad \{\alpha\in C^{\omega}\,\mid\,v(\gamma\alpha)\in O'\}\in\Gamma$
\item\label{hyp:det} The game $\langle C,D,W\rangle$ is determined for all $W\in\Gamma$ and $D\subseteq C^*$.
\end{enumerate}
Then the game $\langle \{a,b\},C,d,O,v,\{<,<^{-1}\}\rangle$ has a subgame perfect equililbrium.

\begin{proof}
by Lemma~\ref{lem:abstract-zero-sum-SPE} where Conditions~\ref{cond:azs-spe1}, \ref{cond:azs-spe2}, and \ref{cond:azs-spe3} hold by finiteness of $O$.
\end{proof}
\end{theorem}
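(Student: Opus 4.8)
The plan is to obtain the statement as an instance of Lemma~\ref{lem:abstract-zero-sum-SPE}, with $\prec := {<}$ and $\Gamma$ as given in the theorem, so the work reduces to checking the six hypotheses of that lemma. Three of them are essentially immediate: its first hypothesis holds because a strict linear order is in particular a strict weak order; Condition~\ref{cond:azs-spe5} is literally Hypothesis~\ref{hyp:det} of the theorem; and Condition~\ref{cond:azs-spe4} follows from Hypothesis~\ref{hyp:wo-ne-stable}, since every $\prec$-extremal interval is just some subset $I \subseteq O$, and then $v^{-1}[I]\cap\gamma C^\omega$ is (up to the routine identification of the subtree rooted at $\gamma$ with $C^*$) a set of the form required there. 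The remaining Conditions~\ref{cond:azs-spe1}, \ref{cond:azs-spe2} and \ref{cond:azs-spe3} are the ones that genuinely use the finiteness of $O$.

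For Condition~\ref{cond:azs-spe1} I would observe that its premise demands an infinite strictly $\subsetneq$-decreasing chain $G_a(p_{<\varphi(0)}) \supsetneq G_a(p_{<\varphi(1)}) \supsetneq \cdots$ of subsets of the finite set $O$, which cannot exist, so the implication holds vacuously. Condition~\ref{cond:azs-spe3} is immediate: for a non-empty closed $E \subseteq C^\omega$ the image $v[E]$ is a non-empty subset of the finite linear order $(O,<)$ and hence has a least and a greatest element, the desired $\prec$-extremal elements. The one point needing a (short) argument is Condition~\ref{cond:azs-spe2}: given $\gamma \in C^*$ and $X \in \{a,b\}$, the family $\{g_X(\gamma,s) \mid s \in S_X\}$ is a non-empty subfamily of the finite powerset $\mathcal{P}(O)$, hence finite, and by the strict-weak-order property (the footnote to Lemma~\ref{lem:max-guarant0}) it is linearly ordered by inclusion; a finite non-empty $\subseteq$-chain has a least element, which therefore equals $\bigcap_{s} g_X(\gamma,s) = G_X(\gamma)$ and is attained by some strategy $s^\ast$, giving $g_X(\gamma,s^\ast) = G_X(\gamma)$.

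With all six hypotheses verified, Lemma~\ref{lem:abstract-zero-sum-SPE} yields a subgame perfect equilibrium. The main obstacle --- and the reason the proof does not simply fall under Theorem~\ref{thm:zs-fc-spe} --- is Condition~\ref{cond:azs-spe2}: here $C$ need not be finite, so the dedicated Lemma~\ref{lem:max-guarant0} (which assumes finite $C$) is not available, and one must instead use finiteness of $O$ to see that the relevant family of guarantee sets is already a finite chain, so that its intersection is realised by a single strategy.
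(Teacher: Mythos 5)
Your proposal is correct and follows essentially the same route as the paper: the paper's proof is exactly the reduction to Lemma~\ref{lem:abstract-zero-sum-SPE}, with Conditions~\ref{cond:azs-spe1}, \ref{cond:azs-spe2} and \ref{cond:azs-spe3} holding by finiteness of $O$ (and the remaining hypotheses matching the theorem's assumptions), which is what you verify. Your write-up merely spells out these finiteness arguments in more detail than the paper does, which is fine.
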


\begin{corollary}\label{cor:qbd-spe}
Let $\langle \{a,b\},C,d,O,v,\{<,<^{-1}\}\rangle$ be an infinite sequential game where $O$ is finite and $<$ is a strict linear order. If $v^{-1}(o)$ is quasi-Borel for all $o\in O$, the game has a subgame perfect equilibrium.
\begin{proof}
From Theorem~\ref{thm:zs-fo-spe}, quasi-Borel determinacy~\cite{martin2}, and Lemma 3.1. in~\cite{leroux3}.
\end{proof}
\end{corollary}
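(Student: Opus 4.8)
The plan is to apply Theorem~\ref{thm:zs-fo-spe} with $\Gamma$ taken to be the class of quasi-Borel subsets of $C^\omega$. Under this choice, two things remain to be verified: that the sets occurring in Condition~\ref{hyp:wo-ne-stable} are quasi-Borel, and that Condition~\ref{hyp:det} holds, i.e.\ that every win/lose game on $C$ with a quasi-Borel winning set is determined.

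For Condition~\ref{hyp:wo-ne-stable}, fix $O' \subseteq O$ and $\gamma \in C^*$. Since $O$ is finite, $v^{-1}(O') = \bigcup_{o \in O'} v^{-1}(o)$ is a finite union of quasi-Borel sets, hence quasi-Borel. The set $\{\alpha \in C^\omega \mid v(\gamma\alpha) \in O'\}$ is then the preimage of $v^{-1}(O')$ under the continuous shift map $\alpha \mapsto \gamma\alpha$, and since quasi-Borel sets are closed under such preimages, it is again quasi-Borel. The closure properties invoked here (finite/countable unions, preimages under the relevant continuous maps) are exactly those packaged in Lemma~3.1 of~\cite{leroux3}, which I would cite rather than reprove. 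For Condition~\ref{hyp:det}, I would invoke Martin's quasi-Borel determinacy theorem~\cite{martin2}: every Gale--Stewart game whose winning set is quasi-Borel is determined; this applies to $\langle C, D, W\rangle$ for any $W \in \Gamma$ and any $D \subseteq C^*$, the partition of the positions among the two players being irrelevant to the applicability of the theorem.

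With both conditions established, Theorem~\ref{thm:zs-fo-spe} immediately delivers a subgame perfect equilibrium. There is no real obstacle here: the only point requiring care is the confirmation that the quasi-Borel sets are closed under the finite unions and continuous preimages used above, and this is routine from the definition (and precisely the content of the cited Lemma~3.1 in~\cite{leroux3}). In effect the corollary is just the concrete instantiation of Theorem~\ref{thm:zs-fo-spe} obtained by plugging in the largest determinacy class currently available, namely quasi-Borel.
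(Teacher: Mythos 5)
Your proposal is correct and follows essentially the same route as the paper: instantiate Theorem~\ref{thm:zs-fo-spe} with $\Gamma$ the quasi-Borel sets, obtain Condition~\ref{hyp:det} from Martin's quasi-Borel determinacy, and obtain Condition~\ref{hyp:wo-ne-stable} from the closure properties of the quasi-Borel pointclass as recorded in Lemma~3.1 of~\cite{leroux3}. The paper's proof is just a one-line citation of exactly these three ingredients, so your write-up is a faithful (and more explicit) version of it.
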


In the remainder of this section we discuss existence of Nash equilibria in multi-player games. Theorem~\ref{thm:iac-ne} below is our most general result.

\begin{theorem}\label{thm:iac-ne}
Let $\langle A,C,O,d,v,(\prec_a)_{a\in A}\rangle$ be a game, let $\Gamma\subseteq\mathcal{P}(C^\omega)$, and assume the following.
\begin{enumerate}
\item\label{cond:iac-ne0} The $\prec_a$ are strict weak orders.
\item\label{cond:iac-ne1} The game $\langle C,D,W\rangle$ is determined for all $W\in\Gamma$, $D \subseteq C^*$.
\item\label{cond:iac-ne2} For every $a\in A$ and $\prec_a$-terminal interval $I$ and $\gamma\in C^*$, we have $\left (v^{-1}[I]\cap \gamma C^\omega \right ) \in\Gamma$.
\item\label{cond:iac-ne3} For every play $p\in C^\omega$ and increasing sequence $\varphi: \mathbb{N} \to \mathbb{N}$, if $d(p_{<\varphi(n)}) = a$ and $G_a(p_{<\varphi(n+1)}) \subsetneq G_a(p_{<\varphi(n)})$ for all $n\in\mathbb{N}$, then $v(p)\in \cap_{n\in\mathbb{N}}G_a(p_{<\varphi(n)})$.
\item\label{cond:iac-ne4} For all $\gamma\in C^\omega$, there exists $s$ such that $g_a(\gamma,s)=G_a(\gamma)$.
\end{enumerate}
Then the game $\langle A,C,O,d,v,(\prec_a)_{a\in A}\rangle$ has a Nash equilibrium.

\begin{proof}
Since the proof is similar to that of Theorem 2.9 in~\cite{leroux3}, we rephrase and give it a more intuitive flavour. Let $\sigma$ be a strategy profile where every player is using a witness to Lemma \ref{lem:max-guarant1}. Let $p$ be the induced play. We now turn $\sigma$ into a Nash equilibrium with $p$ as induced play by use of threats. More specifically, at each node $p_{<n}$ we let the players other than $a := d(p_{<n})$ threaten Player $a$ that if she deviates from $p$ exactly at $p_{<n}$, they will team up against her at every subsequent position $\gamma$ after $p_{<n}$ other than those extending the prescribed $p_{<n+1}$.

We claim that if they team up, they can prevent Player $a$ from getting better outcome than $v(p)$ by deviating to $\gamma$, which will suffice. Let us build a win-lose game $\langle C,D,W\rangle$, with Player $a$ against her threatening opponents gathered as a meta-player, and where the winning set for Player $a$ is defined by $W = v^{-1}[I] \cap \gamma C^\omega$, where $I := \{o\in O\,\mid\, v(p)\prec_a o\}$, and $D$ is defined by $D = d^{-1}(\{a\}) \cup \left (C^* \setminus \gamma C^*\right )$. This game is determined by Assumptions~\ref{cond:iac-ne1} and \ref{cond:iac-ne2}, and Player $a$ looses it, otherwise her winning strategy would guarantee that $v(p)\notin G_a(p_{<n})$ and thus contradict the choice of $p$. Therefore the threat of the opponents of Player $a$ is effective.
\end{proof}
\end{theorem}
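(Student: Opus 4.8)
The plan is to adapt the proof of Theorem~2.9 in~\cite{leroux3}: arm every player with a single strategy attaining her best guarantee in \emph{every} subgame at once, glue these into one profile $\sigma$, read off its induced play $p$, and then repair $\sigma$ into a Nash equilibrium by having the other players punish whoever deviates first.

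First I would apply Lemma~\ref{lem:max-guarant1} once per player $a\in A$; its three hypotheses are exactly Conditions~\ref{cond:iac-ne0}, \ref{cond:iac-ne3} and \ref{cond:iac-ne4}, so for each $a$ we obtain a strategy $s_a^*$ with $g_a(\gamma,s_a^*)=G_a(\gamma)$ for all $\gamma\in C^*$. Let $\sigma$ be the profile assembled from the $s_a^*$ and let $p$ be the play it induces. The key fact to record is $v(p)\in G_a(p_{<n})$ whenever $d(p_{<n})=a$: since $p$ extends $p_{<n}$ and is consistent with $s_a^*$, irreflexivity of $\prec_a$ gives $v(p)\in g_a(p_{<n},s_a^*)=G_a(p_{<n})$.

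Next I would define the Nash equilibrium $\tau$: it follows $p$ until the play first leaves $p$, which can only occur at a node $p_{<n}$ because its owner $a:=d(p_{<n})$ picked some $c\neq p_n$; from the node $\gamma:=p_{<n}c$ onward, every player other than $a$ switches to a coalition strategy punishing $a$ (punishing only the first deviator sidesteps conflicts between coalition strategies). For the punishment, put $I:=\{o\in O\mid v(p)\prec_a o\}$, a $\prec_a$-terminal interval, and consider the win/lose game $\langle C,D,W\rangle$ with $W:=v^{-1}[I]\cap\gamma C^\omega$ and $D:=d^{-1}(\{a\})\cup(C^*\setminus\gamma C^*)$, i.e.\ Player~$a$ against the coalition taken as one meta-player. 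By Conditions~\ref{cond:iac-ne2} and \ref{cond:iac-ne1} this game is determined. Player~$a$ cannot win it, for a winning strategy of hers, restricted to $d^{-1}(\{a\})$, would be a strategy $\hat s$ in the original game forcing every compatible continuation of $p_{<n}$ into $v^{-1}[I]$; since $I$ is upward closed for the induced $\preceq_a$ this yields $g_a(p_{<n},\hat s)\subseteq I$, hence $G_a(p_{<n})\subseteq I$, hence $v(p)\in I$, i.e.\ $v(p)\prec_a v(p)$ --- absurd. So the coalition has a winning strategy, and using it as the punishment guarantees that any unilateral deviation of $a$ produces an outcome outside $I$, i.e.\ not strictly preferred by $a$ to $v(p)$. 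Since a unilateral deviation from $\tau$ always reroutes the play at a single node of $p$ owned by the deviator, this makes $\tau$ a Nash equilibrium.

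The real obstacle here is bookkeeping rather than a new idea: one must check that $\tau$ is well-defined, that a unilateral deviation really does produce a unique first-deviation node lying on $p$ and owned by the deviator, and that the chain $v(p)\in G_a(p_{<n})\subseteq g_a(p_{<n},\hat s)\subseteq I$ closes --- the last inclusion relying on $\neg(o\prec_a v(p'))$ being equivalent to $v(p')\preceq_a o$ for a strict weak order and on terminal intervals being upper sets for $\preceq_a$. The substantive work is already done by the best-guarantee machinery, Lemma~\ref{lem:max-guarant1}, and the assumed determinacy; the argument only has to assemble them.
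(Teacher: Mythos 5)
Your proposal is correct and takes essentially the same route as the paper's proof: strategies from Lemma~\ref{lem:max-guarant1} induce the play $p$, and unilateral deviations are deterred by the determinacy (Conditions~\ref{cond:iac-ne1} and \ref{cond:iac-ne2}) of exactly the same win/lose game $\langle C,D,W\rangle$ with $W=v^{-1}[I]\cap\gamma C^\omega$ and $D=d^{-1}(\{a\})\cup(C^*\setminus\gamma C^*)$. You merely spell out details the paper leaves implicit, such as $v(p)\in G_a(p_{<n})$, the chain $G_a(p_{<n})\subseteq g_a(p_{<n},\hat s)\subseteq I$ via negative transitivity, and the well-definedness of the combined punishment profile.
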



Theorem~\ref{thm:NE-eq} below is a simpler version of Theorem~\ref{thm:iac-ne} that does only involve primitive notions from the definition of a game. Especially, it does not refer to the notion of guarantee. Via a necessary and sufficient condition, it shows how essential the feature of Example~\ref{example:nonash} is for the existence of Nash equilibrium.

\begin{theorem}\label{thm:NE-eq}
Let $g$ be a $\langle A,C,O,d,v,(\prec_a)_{a\in A}\rangle$ be a game where the $\prec_a$ are strict weak orders and $v$ is Borel-measurable. The following are equivalent.

\begin{enumerate}
\item For every $X\in A$ and $(p^n)_{n\in\mathbb{N}}$ sequence of plays in $C^\omega$ converging to some $p$, and increasing $\varphi: \mathbb{N} \to \mathbb{N}$, if for all $n\in\mathbb{N}$ we have $d(p_{<\varphi(n)}) = X$, $p_{<\varphi(n)} = p^n_{<\varphi(n)}$, $p_{\varphi(n)} \neq p^n_{\varphi(n)}$, and $v(p^n) \prec_X v(p^{n+1})$, then $v(p^n) \prec_X v(p)$ for all $n\in\mathbb{N}$.

\item Every finite-branching game derived from the original game by pruning has an NE.
\end{enumerate}

\begin{proof}
Let us first prove $1.\Rightarrow 2.$ by invoking Theorem~\ref{thm:iac-ne}. More specifically,  let $T$ be a finite-branching, infinite subtree of $C^\omega$ and consider the restriction of the original game to $T$. Conditions~\ref{cond:iac-ne2} and \ref{cond:iac-ne1} follow from Borel measurability and~\cite{martin}. Condition~\ref{cond:iac-ne4} comes from Lemma~\ref{lem:max-guarant0} (actually a straightforward extension of Lemma~\ref{lem:max-guarant0} to trees with finite-yet-unbounded branching), and Condition~\ref{cond:iac-ne3} follows directly from the assumption.

For $2.\Rightarrow 1.$, let $X\in\{a,b\}$ and let $(p^n)_{n\in \mathbb{N}} \to p\in C^\omega$ and increasing $\varphi: \mathbb{N} \to \mathbb{N}$ such that for all $n\in\mathbb{N}$ we have $d(p_{<\varphi(n)}) = X$, $p_{<\varphi(n)} = p^n_{<\varphi(n)}$, $p_{\varphi(n)} \neq p^n_{\varphi(n)}$, and $v(p^n) \prec_X v(p^{n+1})$. Let $T$ be the tree made of the prefixes of $p$ and the $p^n$. Since the game induced by $T$ has an NE and its tree structure is similar to Example~\ref{example:nonash}, $v(p^n) \prec_X v(p)$ must hold for all $n\in\mathbb{N}$.
\end{proof}
\end{theorem}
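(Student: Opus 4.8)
The plan is to prove the equivalence by leveraging Theorem~\ref{thm:iac-ne} for the forward direction and a direct construction mimicking Example~\ref{example:nonash} for the backward direction. For $1. \Rightarrow 2.$, I would fix a finite-branching infinite subtree $T$ of $C^\omega$ obtained by pruning, and verify the five hypotheses of Theorem~\ref{thm:iac-ne} for the restricted game. Conditions~\ref{cond:iac-ne0} is inherited. Conditions~\ref{cond:iac-ne1} and \ref{cond:iac-ne2} are exactly Borel determinacy (Martin~\cite{martin}), since for any $a$ and any $\prec_a$-terminal interval $I$, the set $v^{-1}[I] \cap \gamma C^\omega$ is Borel by Borel-measurability of $v$, and any such game on the pruned tree is a Gale--Stewart game with Borel payoff. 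Condition~\ref{cond:iac-ne4} follows from Lemma~\ref{lem:max-guarant0}: although that lemma is stated for finite $C$, its proof only uses that each node has finitely many successors, so it extends verbatim to finite-branching (possibly unbounded) trees; I would remark on this explicitly. The only slightly delicate point is Condition~\ref{cond:iac-ne3}, which I would deduce from hypothesis $1.$ by observing that a strictly decreasing chain of guarantees $G_a(p_{<\varphi(n+1)}) \subsetneq G_a(p_{<\varphi(n)})$ along a play $p$ forces, via Observation~\ref{obs:G}(3) and the definition of the guarantee, the existence of plays $p^n$ branching off at $p_{\varphi(n)}$ witnessing strictly improving outcomes for the controller $a$ of those nodes; then hypothesis $1.$ gives $v(p^n) \prec_a v(p)$, which combined with $v(p^n)$ exhausting the incomparability classes below the guarantee yields $v(p) \in \bigcap_n G_a(p_{<\varphi(n)})$.

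For $2. \Rightarrow 1.$, I would argue by contraposition or directly: given $X$, a convergent sequence $(p^n) \to p$, and $\varphi$ satisfying the listed branching and monotonicity conditions but with $v(p^n) \not\prec_X v(p)$ for some $n$, I build the subtree $T$ consisting of all finite prefixes of $p$ together with all finite prefixes of every $p^n$. This $T$ is a pruning of the original game tree, and its structure is precisely that of the single-player counterexample in Example~\ref{example:nonash}: along the ``spine'' $p$, at each node $p_{<\varphi(n)}$ player $X$ may either continue along $p$ or branch off toward the improving outcome $v(p^n)$. Since the $v(p^n)$ form a $\prec_X$-increasing sequence and $v(p)$ does not dominate all of them, no strategy profile can be a Nash equilibrium: whatever outcome the profile induces, if it is $v(p)$ then $X$ profitably deviates to reach some better $v(p^n)$, and if it induces some $v(p^n)$ (or any other outcome reachable in $T$), then $X$ can still improve by going deeper along the spine. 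This contradicts the assumption that every such pruned game has a Nash equilibrium.

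The main obstacle I anticipate is the careful bookkeeping in the $1. \Rightarrow 2.$ direction connecting the abstract decreasing-guarantee condition~\ref{cond:iac-ne3} to the concrete converging-sequence-of-plays condition~$1.$. One must translate ``the guarantee strictly shrinks at $p_{<\varphi(n)}$'' into ``there is an alternative play $p^n$ through a sibling of $p_{\varphi(n)}$ whose outcome lies in $G_a(p_{<\varphi(n)}) \setminus G_a(p_{<\varphi(n+1)})$, hence is $\prec_X$-above the outcome reachable by staying'', and then check that these $p^n$ can be chosen to converge to $p$ (which is automatic since $p^n$ agrees with $p$ on the first $\varphi(n)$ coordinates) and to be $\prec_X$-increasing, so that hypothesis~$1.$ applies and delivers $v(p^n) \prec_X v(p)$; finally one argues this forces $v(p)$ into every $G_a(p_{<\varphi(n)})$ because those guarantees are (by the strict-weak-order structure, cf.\ the footnote in Lemma~\ref{lem:max-guarant0}) exactly the terminal intervals generated by the incomparability classes of the $v(p^m)$ for $m \geq n$. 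Getting this correspondence exactly right, including the edge cases where only finitely many strict drops occur, is the technical heart of the argument; the rest is a routine appeal to Theorem~\ref{thm:iac-ne} and Borel determinacy.
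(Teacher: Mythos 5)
Your proposal follows essentially the same route as the paper's proof: for $1.\Rightarrow 2.$ you verify the hypotheses of Theorem~\ref{thm:iac-ne} for the pruned finite-branching game in the same way (Borel determinacy for Conditions~\ref{cond:iac-ne1} and \ref{cond:iac-ne2}, the finite-branching extension of Lemma~\ref{lem:max-guarant0} for Condition~\ref{cond:iac-ne4}, hypothesis $1.$ for Condition~\ref{cond:iac-ne3}), and for $2.\Rightarrow 1.$ you build the tree of prefixes of $p$ and the $p^n$ and argue as in Example~\ref{example:nonash}, just as the paper does. The only caveat concerns a detail the paper itself leaves implicit ("follows directly from the assumption"): in your sketch of Condition~\ref{cond:iac-ne3}, note that by Observation~\ref{obs:G} the guarantee $G_a$ can only grow when moving to a child of a node owned by $a$, so the strict drops $G_a(p_{<\varphi(n+1)})\subsetneq G_a(p_{<\varphi(n)})$ are produced by the opponents' moves, and the $\prec_a$-increasing branch-off plays feeding hypothesis $1.$ must be extracted somewhat more carefully than via Observation~\ref{obs:G}(3) alone (though at branch positions of your own choosing, since hypothesis $1.$ quantifies over all increasing $\varphi$).
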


However, the modification of Example~\ref{example:nonash} below shows that the conditions of Theorem~\ref{thm:NE-eq} are not necessary for the mere existence of Nash equilibria.

\begin{tikzpicture}[level distance=8mm] 
\node{a}[sibling distance=10mm]
	child{node{$2$}[sibling distance=10mm]}
    child{node{a}[sibling distance=10mm]
		child{node{$\frac{2}{3}$}}
		child{node{a} [sibling distance=10mm] edge from parent[dashed]
			child{node{$\frac{n}{n+1}$} edge from parent[solid]} 
			child{node{} edge from parent[dashed]
				child{node{}edge from parent[draw=none]}
				child{node{$0$} edge from parent[dashed]}
			}
		}	
	};
\end{tikzpicture}

For further comparison, the preparatory work before \cite[Theorem 2.9]{leroux3}
considers strict well-orders only; then \cite[Theorem 2.9]{leroux3}
considers strict well-founded orders, since linear extensions of these
make it possible to invoke the special, linear case, knowing that any Nash
equilibrium for these extensions is still a Nash equilibrium for the
original preferences. However, let us explain why
Theorem~\ref{thm:iac-ne} assumes that preferences are strict weak orders,
instead of more general strict partial orders. In the preparatory work before both results, the algorithm that builds a play step by step makes decisions
based on the guarantees that the subgames offer. If the guarantees of one
player were not ordered by a strict weak order, the player might eventually
regret a previous decision, in the same way that backward induction on
partially ordered preferences may not yield a Nash equilibrium (see \textit{e.g.}, \cite{Krieger03} for a concrete example or page 3 of~\cite{SLR09} for a generic one). So the
algorithm has to run on strict weak orders. (In \cite[Theorem
2.9]{leroux3} it even runs on strict linear orders.)

If we wanted to consider strict partial orders and extend them linearly for the algorithm to work, we would potentially run into two problems: first, there may not exist any linear extension preserving Condition~\ref{cond:iac-ne3}. Second, assumptions~\ref{cond:iac-ne1} and \ref{cond:iac-ne2} of
Theorem~\ref{thm:iac-ne} make sure that the win-lose games associated
with the $\prec_a$-terminal intervals are determined, which is a requirement
for the proof to work. If the preferences were not strict weak orders, we
might think of replacing the condition on terminal intervals by a
condition on the upper-closed sets and then
extend the preferences linearly for the algorithm to work, but in the
special case where the preference of one player were the empty relation,
every subset would be an upper-closed set and its preimage by $v$ would be
in the pointclass with nice closure property, by assumption. If, in
addition, each outcome is mapped to
at most one play, it implies that each subset of $C^\omega$ is in the
pointclass, so Theorem~\ref{thm:iac-ne} could be used with the axiom of
determinacy only, but not with, \textit{e.g.}, Borel determinacy. On the
contrary, \cite[Theorem 2.9, Assumption 3]{leroux3} is not an issue since
there are only countably many outcomes in that setting.

Theorem \ref{thm:iac-ne} has a corollary pertaining to sequential games with real-valued payoffs. Rather than the usual Euclidean topology, we consider the lower topology generated by $\{(-\infty, a) \mid a \in \mathbb{Q}\}$. This space will be denoted by $\mathbb{R}_>$. Note that continuous functions with codomain $\mathbb{R}_>$ are often called upper semi-continuous. As $\id : \mathbb{R}_> \to \mathbb{R}$ is complete for the $\Sigma_2^0$-measurable functions \cite{weihrauchd,stein}, we see that the Borel sets\footnote{As $\mathbb{R}_>$ is not metric (but still countably based), the definition of the Borel hierarchy has to be modified as demonstrated by \name{Selivanov} \cite{selivanov3}. A move towards definitions of Borel measurability on even more general spaces can be found in \cite{paulydebrecht2}.} on $\mathbb{R}_>$ are the same as the Borel sets on $\mathbb{R}$. Moreover, if $(p^n)_{n \in \mathbb{N}}$ is a converging sequence of plays, and $P : \Cantor \to \mathbb{R}_>$ is a continuous payoff function, then $P(\lim_{i \in \mathbb{N}} p^i) \geq \limsup_{i \in \mathbb{N}} P(p^i)$. In particular, Condition~\ref{cond:iac-ne3} in Theorem \ref{thm:iac-ne} is always satisfied for the preferences obtained from upper semi-continuous payoff functions.

\begin{corollary}
\label{corr:uppercont}
Sequential games with countably many players, finitely many choices and upper semi-continuous payoff functions have Nash equilibria.
\end{corollary}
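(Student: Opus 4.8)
The plan is to obtain Corollary~\ref{corr:uppercont} as a direct application of Theorem~\ref{thm:iac-ne}, by verifying its five hypotheses for the game $\langle A, C, O, d, v, (\prec_a)_{a \in A}\rangle$ associated (via Definition~\ref{def:payout}) with a sequential game with countably many players, finite choice set $C$, and upper semi-continuous payoff functions $f_a : C^\omega \to \uint$; here $O = \uint^A$ and $x \prec_a y$ iff $x_a < y_a$. First I would observe that each $\prec_a$ is a strict weak order, since it is the pullback of the strict linear order $<$ on $\uint$ along the projection $x \mapsto x_a$; this gives Condition~\ref{cond:iac-ne0}. For the pointclass, I would take $\Gamma$ to be the Borel subsets of $C^\omega$: Condition~\ref{cond:iac-ne1} is then exactly Borel determinacy (Martin~\cite{martin}), and Condition~\ref{cond:iac-ne2} holds because a $\prec_a$-terminal interval $I \subseteq \uint^A$ is (up to the incomparability classes, which are level sets of the $a$-th coordinate) determined by a threshold on coordinate $a$, so $v^{-1}[I]$ is of the form $f_a^{-1}(U)$ for some Borel (indeed, open or closed) $U \subseteq \uint$, hence Borel; intersecting with the clopen cylinder $\gamma C^\omega$ keeps it Borel.

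Next I would handle Condition~\ref{cond:iac-ne3}, which is the crux and the place where upper semi-continuity is actually used. The paragraph preceding the corollary already records the key fact: viewing $f_a$ as a continuous map into $\mathbb{R}_>$, for any convergent sequence of plays $p^n \to p$ one has $f_a(p) \geq \limsup_n f_a(p^n)$. So if $\varphi$ is increasing, $d(p_{<\varphi(n)}) = a$, and $G_a(p_{<\varphi(n)})$ is strictly decreasing in $n$, then in particular the plays witnessing the guarantees along $p$ converge to $p$, and the strict descent of guarantees forces the relevant outcome values along these plays to be bounded above by $v(p)_a$ in the limit; upper semi-continuity then yields $v(p) \in \cap_n G_a(p_{<\varphi(n)})$. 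I would make this precise by noting that $G_a(p_{<\varphi(n)})$ is an upper-closed subset of the coordinate line (an interval of the form "coordinate $a$ at least $t_n$", possibly with the endpoint behaviour dictated by whether the supremum is attained), that $t_n$ is non-decreasing under the strict descent, and that $v(p)_a \geq \sup_n t_n$ by the $\limsup$ inequality applied to plays realizing values close to $t_n$ in each subgame $p_{<\varphi(n)}$.

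Finally, Condition~\ref{cond:iac-ne4} — the existence, for each $\gamma$, of a strategy $s$ for player $a$ with $g_a(\gamma, s) = G_a(\gamma)$ — follows from Lemma~\ref{lem:max-guarant0}, whose hypotheses I would check: $\prec_a$ is a strict weak order (already noted), $C$ is finite (by assumption), and Condition~\ref{cond:max-guarant02} of that lemma is the ``$g_a$-version'' of Condition~\ref{cond:iac-ne3}, which holds for the same upper-semi-continuity reason since $g_a(\gamma, s) \supseteq G_a(\gamma)$ are again threshold intervals on coordinate $a$ and a strict descent of the $g_a(p_{<\varphi(n)}, s_n)$ pins the outcome of $p$ above the supremum of the thresholds. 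With all five hypotheses verified, Theorem~\ref{thm:iac-ne} delivers a Nash equilibrium. The main obstacle I anticipate is the bookkeeping in Condition~\ref{cond:iac-ne3} (and its $g_a$-analogue): one must argue carefully that a \emph{strict} decrease of the guarantee sets really does correspond to thresholds $t_n$ whose supremum is dominated by $v(p)_a$, handling correctly the two shapes of terminal interval on $\uint$ (open ray $(t,1]$ versus closed ray $[t,1]$, i.e.\ whether the infimum threshold is attained), so that the $\limsup$ inequality $f_a(p) \geq \limsup_n f_a(p^n)$ applies to the right witnessing plays — but this is exactly the content already flagged in the remark preceding the corollary, so no genuinely new idea is required.
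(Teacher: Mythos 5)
Your proposal is correct and takes essentially the same route the paper intends: the paper also derives Corollary~\ref{corr:uppercont} from Theorem~\ref{thm:iac-ne} with $\Gamma$ the Borel sets, getting Conditions~\ref{cond:iac-ne1} and \ref{cond:iac-ne2} from Borel determinacy and Borel measurability of upper semi-continuous functions, Condition~\ref{cond:iac-ne3} from the $\limsup$ inequality recorded just before the corollary, and Condition~\ref{cond:iac-ne4} from Lemma~\ref{lem:max-guarant0} via the $g_a$-analogue of the descent condition. The strictness bookkeeping you flag does close: an infinite strict descent of threshold upper sets forces the supremum of the thresholds to strictly exceed each individual threshold (at most two such sets share a threshold), so $f_a(p)\geq\sup_n t_n$ suffices whether the rays are open or closed.
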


A rather simple argument allows us to transfer existence theorems for
equilibria in games with Borel-measurable valuations to Borel-measurable real-valued payoff functions with upper bound, if one is willing to
replace the original notions by their $\varepsilon$-counterparts. If $v :
\mathbf{S} \to (-\infty,0]^\omega$ is the Borel-measurable payoff function (with a component for each of the countably many players), then
for every positive real $\epsilon$ we define $v_{\epsilon} : \mathbf{S} \to
\mathbb{N}^\omega$ by $v_{\epsilon}^{-1}((i_k)_{k \in \mathbb{N}}) := v^{-1}\left
(]-(i_1+1)\epsilon, i_1\epsilon] \times ]-(i_2+1)\epsilon,i_2\epsilon] \times \ldots \right )$. Then any $v_{\epsilon}$ is again a Borel measurable
valuation (as a product of countably many intervals is $\Pi^0_2$). Furthermore, we define the preferences $\prec_n$ for the $n$-th
player by $(i_k)_{k \in \mathbb{N}} \prec_n (j_k)_{k \in \mathbb{N}}$ iff $i_n < j_n$. Now every Nash equilibrium of the resulting
game is a $\epsilon$-Nash equilibrium of the original game, and every
subgame perfect equilibrium of the resulting game is a
$\epsilon$-subgame perfect equilibrium of the original game.

\begin{corollary}\footnote{In his survey \cite{mertens}, \name{Mertens}
sketches an observation by himself and \name{Neyman} that one may use
Borel determinacy to directly obtain the special case of this result for finitely many players and bounded payoffs.}
\label{corr:borelhasenash}
Sequential games with countably many players and Borel-measurable payoff
functions with upper-bounds admit $\varepsilon$-Nash equilibria.
\begin{proof}
By combining the statement of Theorem~\ref{thm:NE-eq} with the argument above. We can invoke Theorem~\ref{thm:NE-eq} as the preferences $\prec_n$ do not have any infinite ascending chains at all.
\end{proof}
\end{corollary}

\begin{corollary}
\label{corr:esubgame}
A sequential two-player zero-sum game with Borel measurable payoffs has $\varepsilon$-subgame perfect equilibria.
\begin{proof}
By Corollary \ref{cor:qbd-spe} and the argument above.
\end{proof}
\end{corollary}

\section{On the existence of Pareto-optimal NE}
\label{sec:pareto}
In this section we investigate very general classes of games that guarantee existence of Nash equilibria, and such that there exists an NE that is Pareto-optimal among all the profiles of the game (not just Pareto-optimal among all Nash equilibria). In the following, we shall assume that any outcome is realizable to avoid unnecessary case-distinctions.

\begin{lemma}
\label{lemma:paretospecialcase}
Let $\Gamma$ be a determined pointclass. Then every infinite sequential two-player game with a $\Gamma$-measurable outcome function and preferences $y \prec_a x_1 \prec_a \ldots \prec_a x_n$ and $y \prec_b x_n \prec_b \ldots \prec_b x_1$, has a Pareto-optimal NE.
\begin{proof}
By assumption that every outcome is realizable, there is some path $p$ through the game yielding a payoff that is not $y$. For each vertex along this path, by determinacy either the opponent can enforce the outcome $y$, or the controller can enforce some upper interval. As long as the opponent can enforce $y$, he can force the controller to play along the chosen path by threaten punishment by $y$ for deviation. If we ever reach a vertex where the controller (w.l.o.g.~$a$) can enforce $\{x_1, \ldots, x_n\}$, there will be some minimal upper set $\{x_i, \ldots, x_n\}$ (from her perspective) that she can enforce. By determinacy, again, the opponent can enforce $\{y, x_1, \ldots, x_i\}$. We then let both players play their enforcing strategy from this node onwards.

The constructed partial strategies can be extended in an arbitrary way to yield a Nash equilibrium with another outcome than $y$, and these are all Pareto-optimal.
\end{proof}
\end{lemma}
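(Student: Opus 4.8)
The plan is to prove the statement by a threat-based construction, closely mirroring the argument in Lemma~\ref{lem:abstract-zero-sum-SPE} and in Theorem~\ref{thm:iac-ne}, but specialized to the two-player situation with the two opposed ladder-preferences. Since every outcome is assumed realizable, fix a play $\pi$ with $v(\pi) \neq y$, say $v(\pi) = x_j$. I will walk down $\pi$ vertex by vertex, and at each vertex $p_{<n}$ controlled by player $X := d(p_{<n})$ consider the one-vs-all threshold game in which the opponent tries to force the outcome into the terminal interval $\{o \mid y \prec_X o\}$ (from $X$'s perspective, the ``good'' outcomes), restricted to the subtree $p_{<n}C^\omega$; by Condition~\ref{cond:azs-spe4}-style reasoning (here: $\Gamma$-measurability of $v$ plus the preimages of terminal intervals lying in $\Gamma$) together with determinacy of $\Gamma$, this game is determined. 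So either the opponent has a strategy forcing the play to stay in the ``bad'' set $\{y\} \cup \{o \mid o \prec_X y\}$ — but since the only outcome $\preceq_X y$ in the two chains is $y$ itself (the $x_i$ are all strictly above $y$ for both players), this means the opponent can enforce exactly $\{y\}$ — or $X$ can enforce some nonempty terminal interval of the chain.

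Next I would split on which case occurs first along $\pi$. As long as, at every vertex of $\pi$ visited so far, the non-controller can enforce $\{y\}$, the partial strategy profile is: the controller follows $\pi$, and each non-controller, at every off-$\pi$ successor, switches to the enforcing-$\{y\}$ strategy. This is a valid threat: any deviation by the controller at $p_{<n}$ is punished by outcome $y$, which is strictly worse for the controller than $v(\pi) \neq y$ in either preference. If this situation persists forever, the induced play is $\pi$ itself and we are done. Otherwise, let $p_{<n}$ be the first vertex of $\pi$ where the controller, say $a$ (the case $b$ is symmetric by swapping the chains), can enforce a nonempty terminal interval; among the terminal intervals of the chain $y \prec_a x_1 \prec_a \dots \prec_a x_n$ she can enforce, pick the smallest one, $\{x_i, \dots, x_n\}$ (it exists because the chain is finite). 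By determinacy again, $b$ can enforce its complement $\{y, x_1, \dots, x_{i-1}\}$, i.e.\ $b$ can enforce the $b$-terminal interval $\{y, x_1, \dots, x_{i-1}\}$ — note $\{y, x_1, \dots, x_{i-1}\}$ is upper-closed for $\prec_b$ since $x_1$ is $b$'s top. Fix $a$'s enforcing strategy for $\{x_i,\dots,x_n\}$ and $b$'s enforcing strategy for $\{y, x_1,\dots,x_{i-1}\}$ from the subtree at $p_{<n}$ onward; they jointly induce some outcome in the intersection $\{x_i,\dots,x_n\} \cap \{y, x_1,\dots,x_{i-1}\}$, which is forced to be $\{x_i\}$. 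Splice these together with the threat-strategies along the initial segment of $\pi$, and extend arbitrarily (e.g.\ by any strategy) off the relevant cone.

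Finally I would check the two assertions: that this profile is a Nash equilibrium and that it is Pareto-optimal. For the NE property: the induced outcome is $x_i \neq y$. A deviation by $a$ at a vertex strictly before $p_{<n}$ triggers the $\{y\}$-threat, giving $a$ outcome $y \prec_a x_i$ — no gain. A deviation by $a$ at $p_{<n}$ or later: by the choice of $\{x_i,\dots,x_n\}$ as the \emph{minimal} enforceable terminal interval, and by determinacy, $b$'s strategy prevents $a$ from reaching anything strictly above $x_i$ — formally, if $a$ could deviate to reach an outcome in $\{o \mid x_i \prec_a o\}$, she would have a winning strategy in the corresponding one-vs-all threshold game at $p_{<n}$, contradicting minimality (she could then enforce a strictly smaller terminal interval). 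A deviation by $b$ symmetrically only lets $b$ reach outcomes that are $\preceq_b x_i$, since $a$ is holding the interval $\{x_i,\dots,x_n\}$, all of which are $\preceq_b x_i$. Hence no profitable deviation exists. For Pareto-optimality: since the induced outcome is some $x_i$ with $x_i \neq y$, and the only outcomes are $y, x_1,\dots,x_n$ with $y$ strictly below every $x_k$ for both players, any outcome $q$ with $x_i \prec_a q$ forces $q = x_k$ for some $k > i$, but then $q = x_k \prec_b x_i$, so $q$ is worse for $b$; thus $x_i$ is Pareto-optimal (in the strong sense of the Definition), and in fact every outcome other than $y$ is. The main obstacle I anticipate is the bookkeeping for the ``minimal enforceable terminal interval'' step and ensuring that determinacy is legitimately applied to each threshold game — this requires that the winning set $v^{-1}[I] \cap p_{<n}C^\omega$ for a terminal interval $I$ of a chain really is in $\Gamma$, which follows from $v$ being $\Gamma$-measurable and $\Gamma$ being closed under the relevant operations (a point the statement packages into ``determined pointclass''); everything else is routine threat-construction plumbing already carried out in the preceding lemmas.
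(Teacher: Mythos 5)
Your construction follows the paper's proof almost step for step, but there is a genuine error at the key determinacy step after the switch point. You claim that, once $\{x_i,\dots,x_n\}$ is the minimal terminal interval that $a$ can enforce at $p_{<n}$, determinacy gives $b$ a strategy enforcing the complement $\{y,x_1,\dots,x_{i-1}\}$. That cannot be right: "$a$ can enforce $\{x_i,\dots,x_n\}$" means precisely that $a$ wins the win/lose game on the cone $p_{<n}C^\omega$ with winning set $v^{-1}(\{x_i,\dots,x_n\})$, so $b$ has no strategy enforcing the disjoint set $\{y,x_1,\dots,x_{i-1}\}$ (the two "enforcing" strategies would have to induce a play whose outcome lies in both sets). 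Consistently with this, the intersection you then compute, $\{x_i,\dots,x_n\}\cap\{y,x_1,\dots,x_{i-1}\}$, is empty, not $\{x_i\}$, so the outcome of the spliced profile is not what you assert. The correct application of determinacy (the one the paper makes) is to the \emph{next smaller} interval: by minimality $a$ has no winning strategy for $v^{-1}(\{x_{i+1},\dots,x_n\})\cap p_{<n}C^\omega$, hence $b$ can enforce $\{y,x_1,\dots,x_i\}$, with $x_i$ included. Then the two enforcing strategies are compatible, the induced outcome lies in $\{x_i,\dots,x_n\}\cap\{y,x_1,\dots,x_i\}=\{x_i\}$, and the Nash check is immediate: any deviation by $a$ at or after $p_{<n}$ runs into $b$'s fixed strategy and ends in $\{y,x_1,\dots,x_i\}$, no better than $x_i$ for $a$; any deviation by $b$ runs into $a$'s strategy and ends in $\{x_i,\dots,x_n\}$, no better than $x_i$ for $b$.

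Your fallback justification for the $a$-deviations ("if $a$ could deviate to something strictly above $x_i$, she would have a winning strategy in the threshold game, contradicting minimality") does not repair this: a single profitable deviation against the particular strategy $b$ happens to be playing does not yield a winning strategy for $a$ against \emph{all} strategies of $b$, so there is no contradiction with minimality. The argument has to go through the fact that $b$'s fixed strategy is itself winning for $\{y,x_1,\dots,x_i\}$, which is exactly the corrected determinacy step above. The remainder of your proposal, namely the threat by $y$ along the chosen play while the opponent can still enforce $y$, the observation that every outcome other than $y$ is Pareto-optimal, and the remark that the winning sets $v^{-1}[I]\cap\gamma C^\omega$ must lie in $\Gamma$, coincides with the paper's argument and is fine.
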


\begin{theorem}\label{thm:extensive-Pareto}
We fix a non-empty set of players $A$ and  a non-empty set of outcomes $O$. Let $\Gamma$ be a determined pointclass closed under rescaling and union with clopens. Then the following are equivalent for a family $(\prec_a)_{a \in A}$ of linear preferences:
\begin{enumerate}
\item The inverse of the preferences are well-founded and $\forall a,b\in A,\forall x,y,z\in O,\neg(z \prec_a y \prec_a x \wedge x \prec_b z\prec_b y)$.
\item Every finite sequential game (built from $A$, $O$, $(\prec_a)_{a \in A}$) with three leaves has a Pareto-optimal NE.
\item Every infinite sequential game (built from $A$, $O$, $(\prec_a)_{a \in A}$) with a $\Gamma$-measurable outcome function has a Pareto-optimal NE.
\end{enumerate}
\begin{proof}
\begin{description}
\item[$3.\Rightarrow 2.$] Clear.

\item[$2.\Rightarrow 1.$] By contraposition, let us assume that $z \prec_a y \prec_a x$ and $ x \prec_b z\prec_b y$, and note that the game below has only one NE yielding outcome $z$.

\begin{tikzpicture}[level distance=7mm]
\node{b}[sibling distance=10mm]
	child{node{a}[sibling distance=8mm]
		child{node{$x$}}
		child{node{$y$} }
	}
	child{node{$z$} };
\end{tikzpicture}

\item[$1. \Rightarrow 3.$] By \cite[Lemma 4]{leroux2016} the second assumption in $1.$ implies that there exists a partition $\{O_i\}_{i\in I}$ of $O$ and a linear order $<$ over $I$ such that $i<j$ implies $x<_ay$ for all $a\in A$ and $x\in O_i$ and $y\in O_j$, and such that $<_b\mid_{O_i}=<_a\mid_{O_i}$ or $<_b\mid_{O_i}=<_a\mid_{O_i}^{-1}$ for all $a,b\in A$. By the well-foundedness assumption, $I$ has a $<$-maximum $m$.

    Fix some $a\in A$, let $\{x_1,\dots,x_n\} := O_m$ (again, by well-foundedness, each slice is finite) such that $x_n <_a \dots <_a x_1$, let $A_0 := \{b\in A| <_b| _{O_m} = <_a|_{O_m}\}$, let $A_1 := A\backslash A_0$, and let $y\notin O_m$. Let us derive a new game on the same tree: each vertex of the original game owned by $b\in A$ is now owned by $A_0$ if $b\in A_0$ and by $A_1$ otherwise. Each play of the original game that induces an outcome outside of $O_m$ induces $y$ in the derived game. The new preferences are $y <_{A_0} x_n <_{A_0}\dots <_{A_0} x_1$ and $y <_{A_1} x_1 <_{A_1}\dots <_{A_1} x_n$. By Lemma \ref{lemma:paretospecialcase}, the derived game has a Pareto-optimal NE (which cannot yield $y$, as this is the only non-Pareto-optimal outcome). It is also a Pareto-optimal NE for the original game.
\end{description}
\end{proof}
\end{theorem}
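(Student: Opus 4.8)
The plan is to prove the cycle of implications $3\Rightarrow 2\Rightarrow 1\Rightarrow 3$, with essentially all the work in the last step. The implication $3\Rightarrow 2$ is immediate: a finite sequential game with three leaves is, after the obvious padding, an infinite sequential game whose outcome function has clopen fibres, and since $\Gamma$ is determined and closed under union with clopens it contains every clopen set, so statement $3$ applies to it verbatim. For $2\Rightarrow 1$ I would argue contrapositively against the configuration clause. Given players $a,b$ and outcomes $x,y,z$ with $z\prec_a y\prec_a x$ and $x\prec_b z\prec_b y$, I would write down the three-leaf game whose root belongs to $b$ and offers either an immediate leaf with outcome $z$ or a move to a vertex of $a$ from which $a$ may take the leaf $x$ or the leaf $y$. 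In any Nash equilibrium $a$ chooses $x$ (as $y\prec_a x$) and then $b$ chooses $z$ (as $x\prec_b z$), so the unique equilibrium outcome is $z$, which is Pareto-dominated since both players strictly prefer $y$; hence this game has no Pareto-optimal NE, contradicting $2$. (The well-foundedness clause of $1$ is the part genuinely used later in $1\Rightarrow 3$.)

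For $1\Rightarrow 3$, the first move is to unpack the preference structure via \cite[Lemma 25]{leroux2}: the configuration clause of $1$ yields a partition $\{O_i\}_{i\in I}$ of $O$ and a linear order $<$ on $I$ such that every outcome of a $<$-higher slice is preferred, by every player, to every outcome of a lower slice, and such that on each slice the preferences of any two players either coincide or are mutually reversed. Well-foundedness of the inverse preferences then furnishes a $<$-greatest slice $O_m$, which for the same reason is finite, say $O_m=\{x_1,\dots,x_n\}$ with $x_n\prec_a\cdots\prec_a x_1$ for a fixed player $a$.

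The second move is a reduction to the two-meta-player situation of Lemma~\ref{lemma:paretospecialcase}. I would split $A$ into $A_0=\{b\mid\ <_b\mid_{O_m}\,=\,<_a\mid_{O_m}\}$ and $A_1=A\setminus A_0$, pick a fresh symbol $y$, and form a new game on the same tree in which every $A_0$-vertex (resp. every $A_1$-vertex) is controlled by a single meta-player $A_0$ (resp. $A_1$), every play whose original outcome leaves $O_m$ is re-assigned the outcome $y$, and the new preferences are $y\prec_{A_0}x_n\prec_{A_0}\cdots\prec_{A_0}x_1$ and $y\prec_{A_1}x_1\prec_{A_1}\cdots\prec_{A_1}x_n$. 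The derived outcome function remains $\Gamma$-measurable, using that $\Gamma$ is closed under rescaling and under union with clopens, so Lemma~\ref{lemma:paretospecialcase} provides a Pareto-optimal NE of the derived game; since $y$ is the unique non-Pareto-optimal outcome there, this equilibrium avoids $y$. Finally I would check that the very same strategy profile is a Pareto-optimal NE of the original game: a unilateral deviation by any player either keeps the induced outcome inside $O_m$ — where it cannot help, by the meta-player equilibrium property together with the slice coincide/reversal structure — or pushes it into a strictly $<$-lower slice, which every player disprefers; and Pareto-optimality lifts because an outcome Pareto-dominating the equilibrium outcome would have to lie in $O_m$ (nothing in a lower slice can dominate it) and would therefore already have dominated it in the derived game.

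The main obstacle will be this last implication, and within it two points in particular: extracting the partition lemma of \cite{leroux2} in exactly the form required, and the book-keeping showing that a Pareto-optimal NE of the collapsed two-meta-player game transports back to the original game without the collapse of $O\setminus O_m$ to a single worst outcome either concealing a profitable deviation or destroying Pareto-optimality. The slicing step is what makes the reduction to Lemma~\ref{lemma:paretospecialcase} — and hence ultimately to the determinacy of $\Gamma$ — possible, so it is the conceptual crux; the rest is the kind of routine tree-surgery already illustrated by the proof of Lemma~\ref{lemma:paretospecialcase}.
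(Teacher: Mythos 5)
Your proposal is correct and follows essentially the same route as the paper: the same three-leaf counterexample for $2\Rightarrow 1$, and for $1\Rightarrow 3$ the same slicing of $O$ via \cite[Lemma 25]{leroux2}, collapse of everything outside the top slice $O_m$ to a worst outcome $y$ with two meta-players $A_0,A_1$, and an appeal to Lemma~\ref{lemma:paretospecialcase} before transferring the equilibrium back. Your added book-keeping for the transfer back (deviations either stay in $O_m$ or drop to a dispreferred slice) just makes explicit what the paper leaves as a one-line assertion.
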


The situation for non-linear orders is less clear. Certainly, whenever some linearization avoids the forbidden pattern from Theorem \ref{thm:extensive-Pareto} ($1.$), there will be a Pareto-optimal NE (as being Pareto-optimal w.r.t.~the linearization implies being Pareto-optimal w.r.t.~the original preferences). However, we do not know whether partial preferences such that any linearization has the forbidden pattern is enough to enable absence of Pareto-optimal NE. Two examples that could potentially play a similar role to the generic counterexample in Theorem \ref{thm:extensive-Pareto} ($2. \rightarrow 1$.) follow:

\begin{example}
We consider a finite two-player game with outcomes $\{x,y, z, \alpha,\beta,\gamma\}$, preferences
$\gamma \prec_a y \prec_a x$ and $z \prec_a \beta \prec_a \alpha$ and $x \prec_b z \prec_b y$ and $\alpha \prec_b \gamma \prec_b \beta$ and game tree:

\begin{tikzpicture}[level distance=7mm]
\node{b}[sibling distance=10mm]
	child{node{a}[sibling distance=8mm]
		child{node{$x$}}
		child{node{$y$}}
		child{node{$\alpha$}}
		child{node{$\beta$}}
	}
	child{node{$z$}}
	child{node{$\gamma$}};
\end{tikzpicture}

The preferences avoid the forbidden pattern from Theorem \ref{thm:extensive-Pareto} ($1.$); but the pattern is present in any linear extension. In the Nash equilibria of the game, player $b$ is choosing either $z$ or $\gamma$; and player $a$ is choosing $x$ or $\alpha$. In particular, the potential equilibrium outcomes are $z$ and $\gamma$ -- precisely those outcomes that are not weakly Pareto-optimal (because every player prefers $y$ to $z$ and $\beta$ to $\gamma$). Both $y$ and $\beta$ would even have been Pareto-optimal.
\end{example}

\begin{example}
We consider a finite two-player game with outcomes $\{x,y,z,t\}$, preferences $t,z \prec_a x,y$ and $x \prec_b z \prec_b y \prec_b t$ and game tree:

\begin{tikzpicture}[level distance=7mm]
\node{b}[sibling distance=10mm]
	child{node{a}[sibling distance=8mm]
		child{node{b}[sibling distance=8mm]
			child{node{$t$}}
			child{node{$y$}}
		}
		child{node{$x$}}
	}
	child{node{$z$} };
\end{tikzpicture}

The preferences are strict weak orders and avoid the forbidden pattern from Theorem \ref{thm:extensive-Pareto} ($1.$); but the pattern is present in any linear extension. The only equilibrium outcome is $z$, despite everyone preferring $y$ \ (\footnote{It may be an interesting remark that in this game every player would benefit, if $b$ could not choose $t$ at his second move.}).
\end{example}

\section{Absence of subgame perfect equilibria}
\label{sec:nosubgameperfect}
In this section we will show that in the simultaneous absence of continuity and the antagonistic/zero-sum property, even a two-player game with three distinct outcomes may fail to have subgame perfect equilibria. It is a straightforward consequence that moving to $\varepsilon$-subgame perfect equilibria cannot help, either. As our (counter-) Example \ref{example:nonash}, the valuation function here is $\Delta_2^0$-measurable, hence, in a sense, not very discontinuous. A similar counterexample is also exhibited in \cite[Example 3]{solan}.
\begin{example}
\label{ex:nosubgameperfect}
The following game where $z <_a y <_a x$ and $x <_b z <_b y$ has no subgame
perfect equilibrium.

\begin{tikzpicture}[level distance=8mm]
\node{a}[sibling distance=10mm]
        child{node{b}[sibling distance=10mm]
                child{node{a}[sibling distance=10mm]
                        child{node{b}[sibling distance=10mm]
                                child{node{a}[sibling distance=10mm]
                                        child{node{}[sibling distance=10mm]
                                                child{node{$x$} edge from
parent[dashed]}
                                                child{node{} edge from
parent[white]}
                                        }
                                        child{node{$y$}}
                                }
                                child{node{$z$}}
                        }
                        child{node{$y$}}
                }
                child{node{$z$}}
        }
        child{node{$y$}};
\end{tikzpicture}

The game above is formally defined as
$\langle\{a,b\},\{0,1\},d,O,v,\{<_a,<_b\}\rangle$, where
$d^{-1}(a):=0^{2*}$ and $v(0^{\omega}) := x$ and
$v[0^{2*}1\{0,1\}^{\omega}]:=\{y\}$ and $v[0^{2*+1}1\{0,1\}^{\omega}]:=\{z\}$.
\begin{proof}
Assume for a contradiction that there is a subgame perfect equilibrium for this
game. Then no subprofile (starting at some node in $0^*$) induces the outcome
$x$, because Player $b$ could then switch to the right and obtain $z$. So for
infinitely many nodes in $0^*$, Players $a$ or $b$ chooses $1$. Also, if Player
$b$ chooses $1$ at some node $0^{2n+1}$, Player $a$ chooses $1$ at the node
$0^{2n}$ right above it. This implies that every subprofile rooted at nodes in
$0^{2*}$ induces the outcome $y$, and subsequently, Player $b$ always chooses
$0$ at nodes in $0^{2*+1}$. But then Player $a$ could always choose $0$ and
obtain $x$, contradiction.
\end{proof}
\end{example}

Conversely, consider a game with finitely many players and outcomes such that the pattern from Example~\ref{ex:nosubgameperfect} is absent from the preferences of any two players. If the outcome function is measurable in the Hausdorff difference hierarchy (of the open sets), \cite[Corollary 2]{leroux2016} says that the game has a Pareto-optimal subgame perfect equilibrium.

A further example shows us that we can rule out subgame perfect equilibria even with stronger conditions on the functions by using countably many distinct payoffs. This example no longer extends to $\varepsilon$-subgame perfect equilibria.
\begin{example}
\label{ex:nosubgameperfect2}
The following game where $y_n := (2^{-n},2^{-n})$ and $z_n := (0,2^{-n-2})$ for all $n\in\mathbb{N}$ has no subgame
perfect equilibrium, although the payoff functions are upper-semicontinuous.

\begin{tikzpicture}[level distance=8mm]
\node{a}[sibling distance=10mm]
        child{node{b}[sibling distance=10mm]
                child{node{a}[sibling distance=10mm]
                        child{node{b}[sibling distance=10mm]
                                child{node{a}[sibling distance=10mm]
                                        child{node{}[sibling distance=10mm]
                                                child{node{$(2,0)$} edge from
parent[dashed]}
                                                child{node{} edge from
parent[white]}
                                        }
                                        child{node{$y_2$}}
                                }
                                child{node{$z_1$}}
                        }
                        child{node{$y_1$}}
                }
                child{node{$z_0$}}
        }
        child{node{$y_0$}};
\end{tikzpicture}

\begin{proof}
If the payoffs are $(2,0)$, Player $b$ can improve her payoff as late as required, so there are infinitely many "right" choices in a subgame perfect equilibrium. If the payoffs are not $(2,0)$ then Player $a$ chooses "right" at every node that she owns, so that Player $b$ chooses "left", but then Player $a$ chooses "left" too.
\end{proof}
\end{example}

In \cite{solan2}, an intricate counterexample is provided showing that subgame-perfect equilibria may even fail to exist for probabilistic strategies (in a two-player game with Borel measurable payoff functions).

\section{Outlook}
There is one open question regarding infinite sequential games with real-valued payoff functions (see the question mark in the Overview table on page \pageref{fig:overview}), namely:
\begin{question}
Do games with countably many players and upper-semicontinuous payoff functions have $\varepsilon$-subgame perfect equilibria?
\end{question}

It seems surprising to have both Nash equilibria and $\varepsilon$-subgame perfect equilibria, but no subgame perfect equilibria -- but this is precisely the situation for finitely many players. On the other hand, given the split between finitely many players and countably many players for lower-semicontinuous payoff functions, one should be cautious about assuming that this result would extend. Thus, we do not present a conjecture regarding the answer to the open question.

The results in this paper are generally not constructive -- but neither is Nash's theorem in \cite{nash}, cf.~\cite{paulyincomputabilitynashequilibria}. The extent of non-constructivity is investigated in \cite{paulyleroux3-cie,paulyleroux3-arxiv}.

The condition on the payoff functions used in Section \ref{sec:ascendingpreferences} seems to merit further investigation. This was that for any sequence $(p^i)_{i \in \mathbb{N}}$ converging to $p$ in $C^\omega$, we find that $\forall i \in \mathbb{N} \ v(p^i) \prec v(p^{i+1})$ implies $\forall i \in \mathbb{N} \ v(p^i) \prec v(p)$. This is a weaker condition than continuity of the function where the upper order topology is used on the codomain, which still seems to be strong enough to formulate some results. In a sense, it is a
weakening of continuity that is orthogonal to Borel-measurability. As an example, a result by \name{Gregoriades} (reported in \cite{pauly-ordinals}) shows that any function of this type from Baire space to the countable ordinals has to be bounded.

In settings inspired by verification, sequential games often take place on a finite graph rather than an infinite tree. In this case, rather than seeking Nash equilibria built from arbitrary strategies, it is desirable to obtain Nash equilibria built from strategies realizable by finite automata. The high-level ideas of this paper can be translated (with some effort) into this setting, as demonstrated by the authors in \cite{paulyleroux4-arxiv}.

\bibliographystyle{eptcs}
\bibliography{../spieltheorie}

\section*{Acknowledgements}
We are very grateful to Vassilios Gregoriades and the anonymous referees of the conference version for their helpful comments. This work benefited from the Royal Society International Exchange Grant IE111233. The authors participated in the Marie Curie International Research Staff Exchange Scheme \emph{Computable
Analysis}, PIRSES-GA-2011- 294962, and are also supported by the ERC inVEST (279499) project.

\end{document}